\titleformat{\section}{\large\bfseries\filcenter}{\thesection}{1em}{}
\titleformat{\subsection}{\bfseries}{\thesubsection}{1em}{}
\titleformat{\subsubsection}[runin]{\bfseries}{\thesubsubsection}{1em}{}[.]
\newcommand*{\rom}[1]{ #1}
\tikzset{
    >=stealth',
    pil/.style={
           ->,
           thick,
           shorten <=2pt,
           shorten >=2pt,}
}
\theoremstyle{plain}
\newtheorem{proposition}{Proposition}[section]
\newtheorem{theorem}[proposition]{Theorem}
\theoremstyle{definition}
\newtheorem{definition}[proposition]{Definition}
\newtheorem{remark}[proposition]{Remark}
\newtheorem{example}[proposition]{Example}
\newcommand{\gl}{\mathfrak{gl}}
\newcommand{\GL}{\mathrm{GL}}
\newcommand{\so}{\mathfrak{so}}
\newcommand{\R}{\mathbb{R}}
\newcommand{\tr}{\mathrm{tr}\,}
\newcommand{\Jac}{\mathrm{Jac}}
\newcommand{\Res}{\mathrm{Res}}
\newcommand{\diff}[1]{{d}  #1}
\newcommand{\diffXp}[1]{ \frac{\diff }{\diff #1} }
\newcommand{\diffFXp}[2]{ \frac{\diff #1}{\diff #2} }
\newcommand{\Complex}{\mathbb{C}}
\newcommand{\Id}{\mathrm{Id}}
\newcommand{\rank}[1]{\mathrm{rank} \, #1}
\newcommand{\CP}{\mathbb{CP}}
\newcommand{\RP}{\mathbb{RP}}
\newcommand{\g}{\mathfrak{g}}
\newlist{longenum}{enumerate}{5}
\setlist[longenum,1]{label=\roman*)}
\setlist[longenum,2]{label=\alph*)}
\numberwithin{equation}{section}
\title{Euler equations on the general linear group, cubic curves, and inscribed hexagons}
\author{Konstantin Aleshkin\thanks{SISSA and Landau Institute for Theoretical Physics, e-mail: \tt{kaleshkin@sissa.it}}
\, and Anton Izosimov\thanks{University of Toronto, e-mail: \tt{izosimov@math.utoronto.ca}}}
\date{}
\begin{document}
\maketitle
\begin{abstract}
We study integrable Euler equations on the Lie algebra $\gl(3,\R)$ by interpreting them as evolutions on the space of hexagons inscribed in a real cubic curve.

\end{abstract}

\section{Introduction}

In this paper we study an integrable matrix differential equation
\begin{align}
\label{mainEquation}
\diffXp{t}{X} = [X^2, A]
\end{align}

where $X \in \gl(3,\R)$ is a real $3$ by $3$ matrix depending on the time $t$, and $A \in \gl(3,\R)$ is a real, fixed, $3$ by $3$ matrix.  While we will not need a precise definition of an integrable system, we will take the point of view of \cite{hitchin2013integrable} according to which \textit{ ``integrability of a system of differential equations should manifest itself through some generally recognizable features: i) the existence of many conserved quantities, ii) the presence of algebraic geometry, iii) the ability to give explicit solutions.''} The above equation shows all these properties. In particular, the algebraic geometry underlying this equation is the geometry of real cubic curves. {This algebraic geometry arises from the possibility to rewrite equation \eqref{mainEquation}  in the so-called \textit{Lax form with a spectral parameter} (see equation \eqref{LaxFlow} below). Lax representation with spectral parameter for equations of type \eqref{mainEquation} was found in the fundamental S.V.\,Manakov's paper \cite{Man}. }

\par

Equation  \eqref{mainEquation}  can be regarded as a special case of several general constructions of integrable systems. In particular, it can be obtained by the argument shift method{~\cite{Man, MF}}, or by the method based on loop algebras \cite{Adler, Adler2, Reyman1, Reyman2}. Depending on the restrictions imposed on the matrices $X$ and $A$, this equation is known under different names. If $A$ is symmetric, and $X$ is skew-symmetric, it becomes the classical Euler equation describing the rotation of a rigid body with a fixed point. If, on the contrary, $A$ is skew-symmetric, and $X$ is symmetric, this equation is known as the Bloch-Iserles system~\cite{bloch2006isospectral, bloch2009class}. Finally, it is worth mentioning the case of skew-Hermitian $X$ and Hermitian $A$. In this setting, the above equation {describes travelling wave solutions} for the famous three-wave equation arising in optics, fluid dynamics, and plasma physics 
\cite{Alb98}. 

\par

In the present paper, we regard equation \eqref{mainEquation} as a dynamical system on the whole space $\gl(3,\R)$ of real $3$ by $3$ matrices. A distinctive feature of this full  system is that, in contrast to the symmetric and skew-symmetric cases, solutions on arbitrary matrices need not be bounded and, in particular, may blow up in finite time. Using algebro-geometric technique, we show that for a generic matrix $A$ the above equation has both types of solutions, that is blow-up solutions, and solutions defined for any $t \in \R$. We also show that the behavior of a given solution can be understood in terms of a simple geometric construction. 
Namely, with each generic initial condition $X$ we associate a real cubic curve $C_X$ {with fixed points at infinity (the \textit{spectral curve} coming from the Lax representation)}, and a hexagon $H_X$ inscribed in this curve in such a way that its sides are parallel to the asymptotes of the curve. Then, we show that the behavior of the solution of  the above equation with initial condition $X$ is completely determined by the number of ovals of the curve $C_X$ and the distribution  of vertices of the hexagon $H_X$ among these ovals. 
\par\smallskip
Among other possible interpretations, the above equation may be regarded as an Euler equation on the Lie algebra $\gl(3,\R)$, or, which is the same, the geodesic flow of a certain left-invariant metric on the general linear group $\GL(3,\R)$. The study of such metrics originates from V.\,Arnold's fundamental paper~\cite{Arn66}, where Arnold suggests a common geometric framework for the Euler equation governing the motion of an ideal fluid, and the Euler equation in rigid body dynamics. In Arnold's approach, both equations describe the geodesic flow of a one-sided invariant metric on a certain Lie group $G$. Such a geodesic flow is a dynamical system on the cotangent bundle ${T}^*G$, and, thanks to the $G$-invariance, it descends to the quotient space ${T}^*G \,/\, G$, which is naturally identified with the dual Lie algebra~$\g^*$. The corresponding equation on $\g^*$ is called an Euler equation.\par Equation \eqref{mainEquation} is an example of an Euler equation on the Lie algebra $\gl(3,\R)$.  It describes the geodesic flow of a left-invariant pseudo-Riemannian metric $(\,, )$ on the group $\GL(3,\R)$ given at the identity by
$$( X, X ) := \tr  X \mathcal{A}^{-1}(X)$$
where $\mathcal A(X) = \frac{1}{2}(AX + XA)$. In particular, the problem of existence of global solutions for equation \eqref{mainEquation} is equivalent to the problem of geodesic completeness for the metric~$(\,, )$. 
\par
A distinctive feature of the Euler equation  \eqref{mainEquation} is its integrability. 
Note that a general Euler equation need not be integrable, and integrable examples are in fact quite rare. In particular, the above equation seems to be the only known example of an integrable Euler equation on $\gl(3,\R)$. 
\par
The problem of geodesic completeness for left-invariant metrics on finite-dimensional Lie groups was studied, for example, in \cite{alekseevskii1987completeness, bromberg1998completude}. Note that for general, non-integrable, metrics, geodesic completeness or, equivalently, existence of global in time solutions of the Euler equation, seems to be a very difficult problem. \par
The classification problem for global and blow-up solutions of equation \eqref{mainEquation}, as well as the corresponding problem for other integrable systems constructed by the argument shift method, was considered  for the first time  in \cite{BIKO}. Note that this problem is of particular interest in the integrable case in connection with the Arnold-Liouville theorem \cite{Arn89}. Recall that this theorem asserts that the phase space of a completely integrable system is almost everywhere foliated into invariant tori, or, in the non-compact case, into invariant cylinders. However, this theorem does not apply for those fibers of the system which contain blow-up solutions.

\par
We also remark that since equation \eqref{mainEquation} is integrable, its solutions can be explicitly expressed in terms of theta functions. So, global behavior of solutions can be, in principle, studied by finding and examining explicit formulas. 
However, as we show in the present paper, global properties of solutions can be in fact understood from purely geometrical considerations, and there is no need in the analysis of complicated theta-functional formulas.  \par
We tried to make the exposition self-contained. In particular, we do not assume that the reader is familiar with the general theory of integrable systems and the algebro-geometric approach to such systems. For most statements which can be, in principle, derived from this general theory, we give geometric proofs (relations to the general theory are explained in remarks; see, in particular, Remark \ref{relGen}). The only exception is, perhaps, Proposition~\ref{velocityFormula} where we follow the standard approach on linearization of an integrable flow on the Jacobian. It would be interesting to find a geometric proof for this statement as well.
\par
Main results of the paper are in Section \ref{results}. Section \ref{proofs} is devoted to proofs of these results. In Section \ref{discussion}, we discuss possible generalizations of our approach to the $\gl(n)$ case and their relation to general questions of real algebraic geometry. 

\par
\smallskip
{\bf Acknowledgments.}
The second author was partially supported by the Dynasty Foundation Scholarship and an NSERC research grant. The authors are grateful to Alexey Bolsinov and Boris Khesin for useful remarks.

\section{Main constructions and results}\label{results}

\subsection{Reduction to diagonal matrices}
In what follows, we assume that the eigenvalues of the matrix $A$ are all distinct and real. {The case of complex conjugate eigenvalues can be treated using similar ideas, but still needs a separate consideration, and we omit it}.\par
 Denote the eigenvalues of $A$ by $a_1$, $a_2$, $a_3$. Note that equation \nolinebreak \eqref{mainEquation} is invariant under similarity transformations
$$
X \mapsto BXB^{-1}, \quad A \mapsto BAB^{-1}.
$$
For this reason, we may assume that $A$ is a diagonal matrix with diagonal entries $a_1$, $a_2$, $a_3$. Therefore, $9$-dimensional family of equations \eqref{mainEquation} boils down to a $3$-dimensional family parametrized by three real numbers $a_1$, $a_2$, $a_3$. \par

So, in what follows, we always assume that $A$ is diagonal with distinct diagonal entries. We call such diagonal matrices \textit{generic}.\par
Provided that $A$ is diagonal, equation~\eqref{mainEquation} is invariant under transformations of the form $X \mapsto DXD^{-1}$ where $D$ is an invertible {diagonal} matrix. Such transformations form a group which may be regarded as the quotient group of invertible diagonal matrices by scalar matrices. We shall denote this quotient group by $\mathbb P\mathrm D(3,\R)$. This group is isomorphic to $(\R^*)^2$, and in particular, it is disconnected. We denote its connected component of the identity by $\mathbb P\mathrm D^+(3,\R)$. The latter group consists of (cosets of) those diagonal matrices whose diagonal entries are of the same sign.

\subsection{Lax representation and spectral curve}\label{LaxRepr}

We begin our study of equation \eqref{mainEquation} by rewriting it as a so-called \textit{Lax equation}

\begin{align}\label{LaxFlow}
\diffXp{t}{X_\lambda}= [X_\lambda, Y_\lambda]
\end{align}
where 
\begin{align}\label{xy}
X_\lambda := X + \lambda A, \quad Y_\lambda := AX + XA + \lambda A^2\end{align} 
and $\lambda \in \Complex$ is an auxiliary time-independent parameter, called the \textit{spectral parameter}. It is straightforward to verify that equations \eqref{mainEquation} and  \eqref{LaxFlow}  are equivalent. 
{\begin{remark}For details about Lax equations with a spectral parameter and their algebraic-geometric solutions see, e.g., the monograph \cite{babelon}.\end{remark}}

The following proposition is well-known.
\begin{proposition}\label{timeInd} If a matrix $X_\lambda$ evolves according to equation \eqref{LaxFlow}, then the eigenvalues of $X_\lambda$ do not change with time.
\end{proposition}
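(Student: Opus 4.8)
The plan is to show that the flow \eqref{LaxFlow} is \emph{isospectral}, i.e. that for each fixed value of the spectral parameter $\lambda$ it preserves the entire characteristic polynomial of the matrix $X_\lambda$. Since the characteristic polynomial of a $3\times 3$ matrix is determined by the three power sums $\tr X_\lambda$, $\tr X_\lambda^2$, $\tr X_\lambda^3$ (through Newton's identities these are equivalent to the coefficients of the polynomial), it is enough to prove that $\tr X_\lambda^k$ is time-independent for every $k \geq 1$. Conservation of all eigenvalues then follows, because a matrix is recovered up to conjugacy, and in particular its spectrum is recovered, from its characteristic polynomial.

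First I would compute the time derivative of $\tr X_\lambda^k$ directly. Differentiating the product $X_\lambda^k$ produces $k$ summands, each a word in which a single factor $X_\lambda$ is replaced by $\diffXp{t}{X_\lambda}$; under the trace these summands all become equal by cyclic invariance, so
\[ \diffXp{t}{\tr X_\lambda^k} = k \, \tr\left( X_\lambda^{k-1}\, \diffXp{t}{X_\lambda} \right). \]
Substituting the Lax equation $\diffXp{t}{X_\lambda} = [X_\lambda, Y_\lambda] = X_\lambda Y_\lambda - Y_\lambda X_\lambda$ splits the right-hand side into $\tr(X_\lambda^{k} Y_\lambda)$ and $\tr(X_\lambda^{k-1} Y_\lambda X_\lambda)$. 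Applying cyclicity of the trace once more shows that these two terms coincide, so they cancel and the derivative vanishes. This gives conservation of each $\tr X_\lambda^k$, hence of the spectrum.

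A cleaner, more conceptual route, which I would mention as an alternative, is to integrate the auxiliary linear equation $\diffXp{t}{g} = -Y_\lambda g$ with $g(0) = \Id$. As this is a linear (though time-dependent) ordinary differential equation, its solution $g(t)$ exists and stays invertible on the whole interval where $X_\lambda$ is defined. A one-line computation, using $\frac{d}{dt}(g^{-1}) = g^{-1} Y_\lambda$, then shows that $g^{-1} X_\lambda g$ has zero time derivative, so that $X_\lambda(t) = g(t)\, X_\lambda(0)\, g(t)^{-1}$. Thus $X_\lambda$ evolves purely by conjugation, which yields the stronger statement that $X_\lambda(t)$ stays within a single conjugacy class, not merely that its eigenvalues are frozen.

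I do not expect any genuine obstacle: this is exactly the standard isospectrality of Lax equations, which is why the statement is flagged as well-known. The only points that require a little care are the bookkeeping of the non-commuting matrix factors, which is handled entirely by the cyclic property of the trace, and the elementary observation that for $3\times 3$ matrices conservation of $\tr X_\lambda$, $\tr X_\lambda^2$, $\tr X_\lambda^3$ is equivalent to conservation of all eigenvalues. Both arguments go through verbatim for every fixed $\lambda \in \Complex$, since $\lambda$ enters only as a constant in the definitions \eqref{xy} of $X_\lambda$ and $Y_\lambda$.
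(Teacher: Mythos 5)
Your main argument is correct and is essentially the paper's proof: both reduce the statement to the conservation of the power traces $\tr X_\lambda^k$ and kill the derivative using the fact that a commutator has zero trace (the paper organizes this as an induction proving $\diffXp{t}{X_\lambda^k} = [X_\lambda^k, Y_\lambda]$ before taking the trace, while you differentiate $\tr X_\lambda^k$ directly and invoke cyclicity twice; these are the same computation in different bookkeeping). Your alternative route via the auxiliary linear equation $\diffXp{t}{g} = -Y_\lambda\, g$, $g(0)=\Id$, is a genuinely different argument, and a nice one: it exhibits the evolution as an explicit conjugation $X_\lambda(t) = g(t)\, X_\lambda(0)\, g(t)^{-1}$, so it yields the stronger conclusion that $X_\lambda(t)$ stays in a single conjugacy class (hence preserves Jordan structure, not just the spectrum), at the price of invoking existence and invertibility of solutions of a time-dependent linear ODE rather than a purely algebraic identity. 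Either version suffices for the proposition.
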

\begin{proof}
Using induction on $k$, one can show that
$$
\diffXp{t}{X_\lambda^k}= [X_\lambda^k, Y_\lambda]
$$
 for any integer $k \geq 1$; therefore
$$
\diffXp{t}{\tr X_\lambda^k}= \tr [X_\lambda^k, Y_\lambda] = 0
$$
where in the last identity we used that the trace of a commutator is always equal to zero. Thus, since traces of powers of $X_\lambda$ do not depend on $t$,  neither do its eigenvalues, q.e.d.
\end{proof}
Proposition \ref{timeInd} implies that the coefficients of the characteristic polynomial $f_X(\lambda, \mu) := \det(X+\lambda A - \mu \Id)$ are conserved along the solutions of equation \eqref{mainEquation}. Note that only six out of ten coefficients explicitly depend on $X$, so there are six conserved quantities. 
We will not need explicit expressions for these conserved quantities. Instead, we organize them into an algebraic curve, called the {spectral curve}. In affine coordinates, this curve is defined by the equation $f_X(\lambda , \mu) = 0$. However, it will be convenient for us to work in homogenous coordinates. For this reason, we give the following definition:

\begin{definition} For a given $X \in \gl(3,\R)$, the curve
$$
 C_{X}:= \{ (z_1 : z_2 :z_3) \in \CP^2 \mid \det(z_3 X +  z_1 A - z_2 \Id) = 0 \}
$$
is called the \textit{spectral curve}.
\end{definition}
By definition, the spectral curve $C_X$ is conserved along solutions of equation \eqref{mainEquation}. 
\begin{proposition}\label{scProp} The spectral curve $C_X$ is a real\footnote{Recall that an algebraic curve is called real if it is invariant under complex conjugation, or, equivalently, if it can be defined as the zero locus of a real polynomial. } projective cubic intersecting the line at infinity $\{z_3 = 0\}$ at points $$\infty_1 = (1: a_1: 0), \quad \infty_2 = (1: a_2: 0), \quad \infty_3 = (1: a_3: 0)$$
where $a_1, a_2, a_3$ are the eigenvalues of $A$.
\end{proposition}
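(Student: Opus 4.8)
The plan is to read off all three claims directly from the defining polynomial $P(z_1,z_2,z_3) := \det(z_3 X + z_1 A - z_2 \Id)$, since each assertion reduces to an elementary property of this determinant. First I would check that $C_X$ is a genuine projective cubic. The entries of the matrix $z_3 X + z_1 A - z_2 \Id$ are linear forms in $z_1, z_2, z_3$, so $P$, being the determinant of a $3 \times 3$ matrix, is a homogeneous polynomial of degree $3$ in these variables. To rule out the degenerate possibility that $P \equiv 0$ (which would make $C_X$ all of $\CP^2$ rather than a curve), I would exhibit a nonzero coefficient: the monomial $z_2^3$ can only arise from the product of the three diagonal entries by taking the $-z_2$ contribution from each, so its coefficient is $(-1)^3 = -1 \neq 0$. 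Hence $\deg P = 3$ and $C_X$ is a cubic.

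Reality is then immediate. Since $X$ and $A$ are real matrices, every coefficient of $P$ is real, so $P(\bar z) = \overline{P(z)}$ and the zero locus $C_X$ is preserved by complex conjugation $(z_1 : z_2 : z_3) \mapsto (\bar z_1 : \bar z_2 : \bar z_3)$. This is precisely the notion of a real curve recalled in the footnote.

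For the points at infinity I would substitute $z_3 = 0$ into $P$, which gives $\det(z_1 A - z_2 \Id)$. The case $z_1 = 0$ leaves $\det(-z_2 \Id) = -z_2^3$, vanishing only for $z_2 = 0$; but $(0:0:0) \notin \CP^2$, so every point of $C_X \cap \{z_3 = 0\}$ has $z_1 \neq 0$. Setting $\mu := z_2 / z_1$, the equation becomes $z_1^3 \det(A - \mu \Id) = 0$, i.e. the characteristic equation of $A$. Its roots are exactly the eigenvalues $a_1, a_2, a_3$, which are distinct by our standing assumption on $A$, and they correspond to the three distinct points $\infty_i = (1 : a_i : 0)$, as claimed.

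The argument is entirely computational, so I do not anticipate any serious obstacle. The only step calling for a little care is the non-degeneracy of the determinant—both that $P$ has degree exactly $3$ and that its restriction to $\{z_3 = 0\}$ does not vanish identically—which is settled above by pointing to the explicit coefficient of $z_2^3$ and to the pure power $-z_2^3$ appearing after the substitution.
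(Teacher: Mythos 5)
Your proof is correct, and it is exactly the routine verification the paper has in mind: the paper's own "proof" consists of the single sentence "The proof is straightforward," and your computation (degree and nondegeneracy via the $-z_2^3$ coefficient, reality from realness of $X$ and $A$, and the restriction to $z_3=0$ reducing to the characteristic polynomial of $A$) is the standard way to fill that in. No gaps; the one point needing care—that the determinant does not vanish identically, on $\CP^2$ or on the line at infinity—is handled explicitly.
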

\begin{proof} The proof is straightforward. \end{proof}
As we show below, any smooth real cubic curve passing through the points $\infty_1$, $\infty_2$, $\infty_3$ is the spectral curve for a suitable matrix $X \in \gl(3,\R)$. Moreover, we explicitly describe the topology of the set of matrices $X$ corresponding to the given curve $C$ in terms of the geometry of $C$.

\subsection{Isospectral sets}
Since the flow \eqref{mainEquation} preserves the spectral curve associated with the matrix $X$, we can restrict this flow to the set 
$$
  \mathcal T_C = \{ X \in \gl(3,\R) : C_X = C\}
$$
of matrices whose spectral curve is the same. Note that the set $\mathcal T_C$ may also be defined as a joint level set for six conserved quantities of equation \eqref{mainEquation}  (recall that these conserved quantities are, by definition, the coefficients of the equation of the spectral curve).
Since the space $\gl(3, \R)$ is $9$-dimensional, we should expect that the set $\mathcal T_C$ is generically of dimension $9 - 6 = 3$.\par
Further, note that the flow \eqref{mainEquation} restricted to the $3$-dimensional manifold $\mathcal T_C$ has a $2$-dimensional symmetry group $\mathbb P\mathrm D^+(3,\R)$ acting by conjugation. A dimension count suggests that the $\mathbb P\mathrm D^+(3,\R)$ orbits of solutions of equation~\eqref{mainEquation} are exactly the connected components of $\mathcal T_C$. In particular, all solutions lying in the same connected component should have the same global behavior. \par 
 In what follows, we aim to answer the following questions.
\begin{longenum}
\item For a given a cubic curve $C$, what is the topology of the set $\mathcal T_C$? In particular, how many connected components does it have?
\item What is the global behavior of solutions of \eqref{mainEquation} on each of these components? In particular, do these solutions blow up or exist for all times?
\item Given an initial condition $X \in \gl(3,\R)$, how do we determine whether the solution passing through $X$ blows up, or exists for all times?
\end{longenum}
The answer to the first two of these questions is given by Theorem \ref{surfacetheorem}. The answer to the third question is given by Theorem \ref{completenesstheorem}.

 \subsection{Topology of isospectral sets}
 \begin{figure}[t]
 \centering
\begin{tabular}{ccc}
\multicolumn{1}{l}{(a)} & \multicolumn{1}{l}{(b)} & \multicolumn{1}{l}{(c)} \\

\includegraphics[scale = 1]{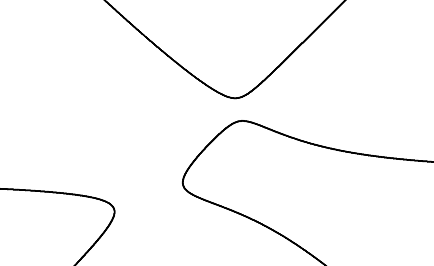} & \includegraphics[scale = 1]{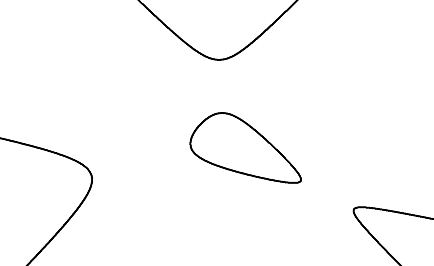}  & \includegraphics[scale = 1]{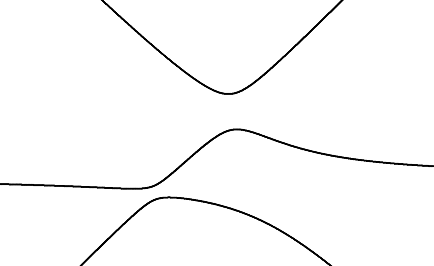} \\ \\
One oval & Two ovals, one bounded & Two unbounded ovals\\
& and one unbounded&
\\
\\
$\mathcal T_C \simeq 4\R^3$&$\mathcal T_C \simeq 4\R^3 \,\sqcup\, 4{S}^1 \times \R^2$&$\mathcal T_C \simeq 4\R^3 \,\sqcup\, 2{S}^1 \times \R^2$
 \end{tabular}

\quad\par
\caption{Types of cubic curves and the topology of corresponding isospectral sets}
\label{cc}
\end{figure}

The answer to the above questions i) and ii) is formulated in terms of the {real part} of the curve \nolinebreak $C$. By definition, the real part $C_\R$ of a real projective curve $C$ is the set of its real points: $C_\R= C \cap \RP^2$. If the curve $C$ is smooth, then its real part consists of a finite number of closed curves, which are called {ovals}. An oval is called {bounded} if it does not intersect the line at infinity. Otherwise, it is {unbounded}. 
It is a classical result that a smooth projective cubic can have either one oval, which is then unbounded, or two ovals, at least one of which is unbounded (see Figure \ref{cc}).
\par
Before we describe the set $\mathcal T_C$, note that for this set to be non-empty, the curve $C$ should have the properties listed in Proposition  \ref{scProp}, i.e. it should be a real cubic passing through the points $\infty_1$, $\infty_2$, and $\infty_3$ where $\infty_i = (1 : a_i :0)$, and $a_1$, $a_2$, $a_3$ are the eigenvalues of the matrix $A$. The following theorem in particular says that for smooth curves $C$ these conditions are also sufficient for the set $\mathcal T_C$ to be non-empty.
 \begin{theorem} \label{surfacetheorem}
Assume that $A$ is a generic diagonal matrix, and let $C$ be a smooth real cubic passing through the points $\infty_1$, $\infty_2$, and $\infty_3$. Then the following statements hold.
\begin{enumerate}
\item If the real part of $C$ has one oval, then the set $\mathcal T_C$ has four connected components each diffeomorphic to $\R^3$.
\item If the real part of $C$ has two ovals and one of them is bounded, then
$\mathcal T_C$ has four components diffeomorphic to $\R^3$ and four components diffeomorphic to ${S}^1 \times \R^2$.
\item Finally, if the real part of $C$ has two unbounded ovals, then
$\mathcal T_C$ has four components diffeomorphic to $\R^3$ and two components diffeomorphic to ${S}^1 \times \R^2$.
\end{enumerate}
Furthermore, all solutions of \eqref{mainEquation} lying on components of $\mathcal T_C$ diffeomorphic to $\R^3$ blow up\footnote{In what follows, when we say that a solution blows up, we mean that it does so both forward and backward in time. Note that equation \eqref{mainEquation}  does have solutions which blow up only in one direction, but these solutions correspond to singular spectral curves.}, while all solutions lying on ${S}^1 \times \R^2$ components exist for all times.
\end{theorem}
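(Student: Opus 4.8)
The plan is to translate the problem into the geometry of the spectral curve through the classical eigenvector construction, and then to read off both the topology of $\mathcal T_C$ and the blow-up dichotomy from the real structure of the resulting data. First I would introduce the eigenvector map $\phi_X \colon C \to \CP^2$ sending a point $p = (z_1:z_2:z_3) \in C$ to the kernel line of $z_3 X + z_1 A - z_2 \Id$, which is one-dimensional for generic $X \in \mathcal T_C$. Since $X + \lambda A \approx \lambda A$ as $\lambda \to \infty$, the eigenvectors at the points $\infty_1,\infty_2,\infty_3$ of Proposition \ref{scProp} are the eigenvectors of $A$, so $\phi_X(\infty_i) = [e_i]$; thus the eigenvector line bundle $L_X := \phi_X^{*}\mathcal{O}(1)$ carries a canonical framing at the three points at infinity. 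Writing $v(p)=(v_1(p):v_2(p):v_3(p))$ for the eigenvector, the eigenvector data single out, via chords through the $\infty_i$, an inscribed hexagon with sides parallel to the asymptotes; this is the geometric avatar $H_X$ of $L_X$ announced in the introduction, and the assignment $X \mapsto H_X$ (equivalently $X \mapsto L_X$) is the backbone of the argument.

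Next I would promote $X \mapsto L_X$ to a bijection after dividing by symmetry. Conjugating $X$ by a diagonal matrix rescales the coordinates of $v(p)$, hence fixes $H_X$ but moves the framing; consequently the complexified isospectral set fibers over an open subset of $\mathrm{Pic}(C)\cong\Jac(C)$ with fiber the diagonal conjugation group $(\CC^*)^2$, consistent with the dimension count $9-6-2=1=\dim\Jac(C)$. The crucial algebraic step is the reconstruction: one recovers $X$, uniquely up to diagonal conjugation, from the framed pair $(C,L_X)$, since the matrix entries are rational expressions in the eigenvector coordinates. This realizes $\mathcal T_C$ as a principal bundle over the relevant real locus with structure group $\mathbb P\mathrm D(3,\R)\cong(\R^*)^2$. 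Restricting to real matrices, the connected subgroup $\mathbb P\mathrm D^+(3,\R)\cong\R^2$ acts freely and contributes the universal $\R^2$ factor of every component, while the component group $(\mathbb{Z}/2)^2$ is encoded concretely by the signs of the real eigenvector coordinates $(v_1:v_2:v_3)$ taken modulo an overall sign.

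Third, I would carry out the real bookkeeping oval by oval. The base is the real part $C_\R$, one circle in case 1 and two circles in cases 2 and 3, parametrized by the moving hexagon. The sign vector of $v(p)$ is locally constant along $C_\R$ and jumps exactly where some coordinate $v_k$ vanishes, so $\mathcal T_C$ is (up to the $\R^2$-orbits and the removal of a degeneration set $B$) a $(\mathbb{Z}/2)^2$-cover of $C_\R$ whose monodromy around each oval is the product of these jumps. The degeneration set $B$ is the finite locus where two coordinates vanish simultaneously, i.e. where a hexagon vertex collides with some $\infty_i$ and the reconstruction breaks down; which oval carries points of $B$, and with what multiplicity, is governed by how the translations $p \mapsto -(p+\infty_i)$ generating the hexagon permute the ovals, which in turn depends on the distribution of $\infty_1,\infty_2,\infty_3$ among them. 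An oval meeting $B$ is cut into intervals $\cong\R$ that, together with the cover and the $\R^2$-orbit, give $\R^3$ components; an oval disjoint from $B$ survives as an $S^1$ base and yields $S^1\times\R^2$ components. Running this count produces the four $\R^3$ components present in every case, together with four additional $S^1\times\R^2$ components in the bounded-plus-unbounded configuration and two in the two-unbounded configuration, and none in the one-oval case.

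Finally, the blow-up dichotomy follows from the linearization of the flow. By Proposition \ref{velocityFormula}, the Abel map turns a solution of \eqref{mainEquation} into a straight line on $\Jac(C)$, so the hexagon parameter moves with constant speed along $C_\R$. On an $S^1\times\R^2$ component the base motion circulates around an oval disjoint from $B$, the trajectory never leaves $\mathcal T_C$, and the solution is global; on an $\R^3$ component the base is an interval of $C_\R\setminus B$ with both endpoints in $B$, the linear motion reaches an endpoint in finite time, and there the eigenvector degenerates so that, by the reconstruction formula, the entries of $X$ diverge in both time directions. The technical heart of the argument, and the step I expect to be the main obstacle, is precisely this reconstruction together with its degeneration analysis: establishing that $X\mapsto L_X$ is an isomorphism onto the asserted framed real locus, that the matrix entries blow up exactly as the parameter approaches $B$, and that the sign-and-oval bookkeeping is tight enough to certify the exact component counts $4$, $4{+}4$, and $4{+}2$.
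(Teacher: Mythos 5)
Your outline follows essentially the same route as the paper: reconstruct $X$ (up to diagonal conjugation) from the curve plus the hexagon/eigenvector data, realize $\mathcal T_C$ as a principal $\mathbb P\mathrm D(3,\R)$-bundle over the space of non-degenerate regularly inscribed hexagons $\simeq C_\R$ minus a point, count components via the sign monodromy of the $(\mathbb Z/2)^2$-part of the structure group, and settle the dynamics by the linearization of Proposition \ref{velocityFormula}. However, the proposal stops exactly where the theorem is actually decided. First, the component counts $4$, $4{+}4$, $4{+}2$ are asserted, not derived: the entire difference between cases 2 and 3 is whether the $(\R^*)^2$-bundle over the circle is trivial or not, and you explicitly defer this (``the step I expect to be the main obstacle''). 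The paper's computation here is concrete: in case 2 the hexagons of type $(6,0)$ have all vertices on the bounded oval, so no vertex ever reaches the line at infinity, one chart (fiber coordinates $x_{21},x_{32}$) covers the whole circle, the bundle is trivial, and one gets four $S^1\times\R^2$ components; in case 3 the type-$(2,4)$ circle contains one hexagon with a side at infinity ($P_1=\infty_2$), and as the hexagon passes through it the sign of $x_{21}$ flips (read off from the equation of the side $l_{13}$ relative to the asymptote $l_2$), so the bundle is non-trivial and the total space has only two components. Second, your degeneration set $B$ is mischaracterized in a way that would break this very computation: the reconstruction fails \emph{only} at the single degenerate hexagon (a vertex $P_i$ or $Q_i$ colliding with the same-index point $\infty_i$ forces all vertices to infinity). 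A vertex colliding with a different-index point, such as $P_1=\infty_2$, is \emph{not} a degeneration and must stay in the base — it is precisely the locus where the trivializing chart changes and where the monodromy is picked up. Treating every collision of a vertex with some $\infty_i$ as a point of $B$ would delete such hexagons, cut the type-$(2,4)$ circle into intervals, and yield the wrong answer ($\R^3$ components instead of $2\,S^1\times\R^2$) in case 3.

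There is a second genuine gap in the dynamical part. For the $\R^3$ components your argument (base motion hits the degeneration in finite time, hence $X(t)$ cannot extend) is the easy direction and is fine modulo the reconstruction estimates. But for the $S^1\times\R^2$ components you conclude global existence from ``the base motion circulates and the trajectory never leaves $\mathcal T_C$'' — this does not follow: $\mathcal T_C$ is non-compact, and a solution whose projection to the base is periodic could a priori still escape to infinity along the fiber (the $\mathbb P\mathrm D^+(3,\R)$-orbit) in finite time. Ruling this out is exactly what requires the principal-bundle structure together with the invariance of the flow; the paper invokes Lichnerowicz's theorem (Theorem \ref{lt}: an invariant vector field on a principal $G$-bundle has complete trajectories if and only if the projected trajectories on the base are complete). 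Without this step, or an explicit substitute controlling the fiber component of the flow, the ``global solutions'' half of the final claim of Theorem \ref{surfacetheorem} is unproved.
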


Note that this theorem does not answer the third of the above questions. Namely, if the spectral curve $C_X$ has two ovals,  then Theorem \ref{surfacetheorem} does not allow us to determine whether a solution with the initial condition $X$ blows up or exists for all times. As we discuss below, the answer to this question can also be given in terms of a simple geometric construction.

\begin{remark}
Note that despite the fact that solutions located on $\R^3$ components blow up, the topology of these components is still compatible with the Arnold-Liouville theorem. This phenomenon is explained in the first author's paper \cite{aleshkin2014topology}.
\end{remark}
\subsection{Regularly inscribed hexagons}\label{rig}

\par

As was pointed out above, it is not in general possible to decide from the spectral curve~$C_X$ whether a solution of equation \eqref{mainEquation} with the initial condition $X$ blows up. So, we need to supplement the curve with some additional data in order to be able to understand the behavior of a given solution. It turns out that as such additional data we can take a certain hexagon inscribed in the spectral curve. This hexagon is constructed as follows.
\par
As before, we assume that $A$ is a diagonal matrix with distinct diagonal entries $a_1$, $a_2$, $a_3$. Under this assumption, the spectral curve $C_X$ has three distinct real asymptotes which are, by definition, the tangent lines to $C_X$ at the points $\infty_1$, $\infty_2$, $\infty_3$. Denote these asymptotes by $l_1$, $l_2$,  $l_3$. Let $X \in \gl(3,\R)$ be such that the spectral curve $C_X$ is smooth. Consider the matrix
$$
X_{{z}} :=  z_3 X +  z_1 A - z_2 \Id = \left(\begin{array}{ccc}x_{11}z_3 + a_1z_1 - z_2 & x_{12}z_3 & x_{13}z_3 \\x_{21}z_3 & x_{22}z_3 + a_2z_1 - z_2  & x_{23}z_3  \\x_{31}z_3  & x_{32}z_3  & x_{33}z_3 + a_3z_1 - z_2 \end{array}\right).
$$
Recall that the zero locus of the determinant of this matrix is, by definition, the spectral curve \nolinebreak $C_X$. First we observe the following:
\begin{proposition}\label{asym}
Asymptotes of the curve $C_X$ are the zero loci of the diagonal entries of the matrix $X_z$. In other words, the equation of the asymptote $l_i$ is $L_i = 0$ where $$L_i := x_{ii}z_3 + a_iz_1 - z_2. $$
\end{proposition}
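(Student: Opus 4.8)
The plan is to compute the tangent line to the cubic $C_X$ at the point $\infty_i$ directly, and to check that it is cut out by the linear form $L_i$. Write $F(z) := \det X_z$ for the defining cubic form. By definition the asymptote $l_i$ is the tangent line to $C_X$ at the point $\infty_i = (1 : a_i : 0)$, and since $C_X$ is smooth this point is a smooth point of $F=0$, so its tangent line is given by the projective gradient
$$
\frac{\partial F}{\partial z_1}(\infty_i)\, z_1 + \frac{\partial F}{\partial z_2}(\infty_i)\, z_2 + \frac{\partial F}{\partial z_3}(\infty_i)\, z_3 = 0.
$$
Thus the entire statement reduces to evaluating these three partial derivatives at $\infty_i$.

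To compute them I would use Jacobi's formula for the derivative of a determinant,
$$
\frac{\partial}{\partial z_k}\det X_z = \tr\left(\mathrm{adj}(X_z)\,\frac{\partial X_z}{\partial z_k}\right),
$$
where $\mathrm{adj}$ denotes the adjugate (cofactor transpose) matrix, together with the derivatives $\partial X_z/\partial z_1 = A$, $\partial X_z/\partial z_2 = -\Id$, $\partial X_z/\partial z_3 = X$, which can be read off from the explicit expression for $X_z$. The crucial simplification occurs upon specializing to $\infty_i$: there $X_z = A - a_i \Id$ is a diagonal matrix whose $i$-th diagonal entry vanishes (since $a_i - a_i = 0$), while the remaining entries $a_j - a_i$ are nonzero by genericity of $A$. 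Such a diagonal matrix has rank $2$, and an elementary computation of cofactors shows that its adjugate is the rank-one matrix $c_i E_{ii}$, where $E_{ii}$ is the matrix unit in position $(i,i)$ and $c_i = \prod_{j \neq i}(a_j - a_i) \neq 0$.

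Substituting $\mathrm{adj}(X_z) = c_i E_{ii}$ into Jacobi's formula then gives, at $\infty_i$,
$$
\frac{\partial F}{\partial z_1} = c_i\,\tr(E_{ii} A) = c_i a_i, \qquad \frac{\partial F}{\partial z_2} = -c_i\,\tr(E_{ii}) = -c_i, \qquad \frac{\partial F}{\partial z_3} = c_i\,\tr(E_{ii} X) = c_i x_{ii},
$$
where in the last equality it is precisely the support of $\mathrm{adj}(X_z)$ on the single entry $(i,i)$ that kills all the off-diagonal entries of $X$ and leaves only $x_{ii}$. Dividing the gradient equation by the nonzero constant $c_i$ produces exactly $a_i z_1 - z_2 + x_{ii} z_3 = 0$, that is $L_i = 0$, as claimed. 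The only step genuinely requiring care is the assertion that the gradient defines the tangent line at all, which is guaranteed by the smoothness hypothesis on $C_X$; the remaining content — the shape of the adjugate of a diagonal matrix with a single zero on the diagonal — is the computational heart of the argument and is entirely routine.
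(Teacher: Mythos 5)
Your proof is correct, but it takes a genuinely different route from the paper's. The paper's own argument avoids differentiation entirely: grouping the terms of the determinant expansion that involve off-diagonal entries (each such entry carries a factor of $z_3$), it writes $\det X_z = L_1L_2L_3 + z_3^2 L$ with $L$ linear, so that on the curve $L_1L_2L_3 = -z_3^2L$; since $L_j(\infty_i) = a_j - a_i \neq 0$ for $j\neq i$ while $z_3$ vanishes at $\infty_i$, the restriction of $L_i$ to the curve vanishes to order at least $2$ at $\infty_i$, which says precisely that the line $L_i=0$ is tangent there, i.e.\ is the asymptote. Your computation instead evaluates the projective tangent line at $\infty_i$ head-on via Jacobi's formula, and all the steps check out: at $\infty_i$ one has $X_z = A - a_i\Id$, its adjugate is indeed $c_iE_{ii}$ with $c_i = \prod_{j\neq i}(a_j-a_i)\neq 0$, and the resulting gradient $c_i(a_i,\,-1,\,x_{ii})$ cuts out exactly $L_i=0$. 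One small improvement you could claim but did not: since your computation exhibits a nonzero gradient at $\infty_i$, the point $\infty_i$ is automatically a smooth point of $C_X$, so the global smoothness hypothesis is not actually needed for this step — your approach proves the proposition for any $X$ with $A$ generic. What the paper's decomposition buys, on the other hand, is structural bookkeeping of $z_3$ factors that is reused immediately afterwards (the same observation underlies formula \eqref{LIJF} and the analysis of the minors $M_{ij}$), whereas your argument, while self-contained and explicit, is purely local at the three points at infinity.
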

\begin{proof}

We have $$\det X_z = L_1 L_2 L_3 + z_3^2 L$$ where $L$ is a linear function in $z_1$, $z_2$, $z_3$. Therefore, the restriction of the function $L_i$ to the curve $\det X_z = 0$ has a zero of order $2$ at infinity, which means that $L_i = 0$ is an asymptote.

\end{proof}
Further, let $M_{ij}(z_1, z_2, z_3)$  be the $(i,j)$ minor of the matrix $X_z$. Then, for $i \neq j$, we have $M_{ij} = z_3 L_{ij}$ where $L_{ij}$ is a linear function  in $z_1$, $z_2$, $z_3$.  Explicitly, one has
\begin{align}\label{LIJF}
L_{ij} = \pm \det \left(\begin{array}{cc} x_{ki} & x_{kk}z_3 + a_kz_1 - z_2   \\x_{ji}  & x_{jk}z_3  \end{array}\right)
\end{align}
where $(i,j,k)$ is any permutation of $(1,2,3)$. Note that the function $L_{ij}$ cannot be identically equal to zero. Indeed, as follows from formula \eqref{LIJF}, if $L_{ij} \equiv 0$, then either the $j$'th row or the $i$'th column of the matrix $ X_z$ contains two zeros. The latter implies that the polynomial $\det X_z$ is reducible, which contradicts the smoothness of the curve $C_X$. \par
This way, we obtain six straight lines $l_{ij}$ given by $L_{ij} = 0$ where $i \neq j$. Properties of these straight lines are described in the following proposition.

\begin{figure}[t]
 \centering

\includegraphics[scale = 1.5]{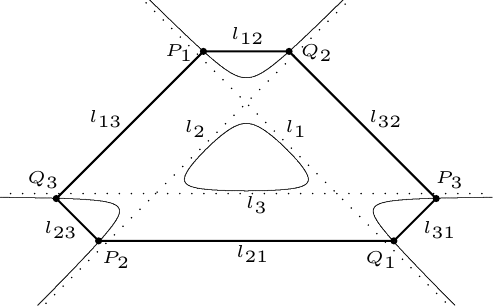}

\caption{Inscribed hexagon $H_X$}
\label{hex}
\end{figure}
\begin{samepage}
\begin{proposition}\label{propsOfLIJ} Let $(i,j,k)$ be any permutation of $(1,2,3)$. Then:
\begin{enumerate}
\item The line $l_{ij}$ is parallel to the asymptote $l_k$ of the curve $C_X$. In other words, we have
$$
l_{12} \parallel l_{21} \parallel l_3, \quad l_{23} \parallel l_{32} \parallel l_1, \quad l_{31} \parallel l_{13} \parallel l_2.
$$
\item We have $l_{ij} \neq l_{ik}$, and $l_{ji} \neq l_{ki}$.
\item The points $$P_i := l_{ij} \cap l_{ik}, \quad Q_i := l_{ji} \cap l_{ki}$$
lie in the real part of the curve $C_X$. In other words, $P_1 Q_2 P_3 Q_1 P_2 Q_3$ is an inscribed hexagon (see Figure \ref{hex}).
\end{enumerate}
\end{proposition}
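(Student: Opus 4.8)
The plan is to settle the parallelism in part 1 and the distinctness in part 2 by a direct computation with the explicit linear forms $L_{ij}$, and to prove the incidence statement in part 3 — the only part with genuine geometric content — using the adjugate matrix $\mathrm{adj}(X_z)$ together with the rank drop of $X_z$ along the spectral curve.

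First I would record the shape of $L_{ij}$. Expanding the determinant in \eqref{LIJF} gives
$$ L_{ij} = \pm\bigl( x_{ji}\,L_k - x_{jk}x_{ki}\,z_3\bigr), $$
so the homogeneous $(z_1,z_2)$-part of $L_{ij}$ is $x_{ji}(a_k z_1 - z_2)$, a multiple of the corresponding part of $L_k$. Consequently $l_{ij}$ and the asymptote $l_k$ meet the line at infinity in the same point $\infty_k=(1:a_k:0)$, which is exactly parallelism; note that this requires $x_{ji}\neq 0$, i.e. that $l_{ij}$ is a genuine affine line and not the line at infinity. For part 2, an equality $l_{ij}=l_{ik}$ would require the $(z_1,z_2)$-parts $x_{ji}(a_k z_1 - z_2)$ and $x_{ki}(a_j z_1 - z_2)$ to be proportional; since $a_j\neq a_k$ this forces $x_{ji}=x_{ki}=0$, which is impossible, because then the $i$-th column of $X_z$ would equal $L_i e_i$ and $\det X_z=L_i M_{ii}$ would be reducible, contradicting smoothness of $C_X$. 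The same reasoning gives $l_{ji}\neq l_{ki}$.

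The heart of the matter is part 3, which I would prove by contradiction using $B:=\mathrm{adj}(X_z)$. Recall that $X_z B = B X_z = (\det X_z)\,\Id$, that the off-diagonal entries of $B$ are $B_{ab}=\pm z_3 L_{ba}$, and that the diagonal entries are $B_{aa}=M_{aa}=L_bL_c - x_{bc}x_{cb}z_3^2$ (with $\{a,b,c\}=\{1,2,3\}$). At the finite point $P_i=l_{ij}\cap l_{ik}$ we have $L_{ij}=L_{ik}=0$, hence $B_{ji}=B_{ki}=0$ and the $i$-th column of $B$ collapses to $M_{ii}\,e_i$. If $P_i$ did not lie on $C_X$, then $\det X_{P_i}\neq 0$, so $B$ would be invertible and in particular $M_{ii}\neq 0$; but reading off the $(j,i)$ and $(k,i)$ entries of $X_zB=(\det X_z)\,\Id$ gives $x_{ji}z_3 M_{ii}=0$ and $x_{ki}z_3 M_{ii}=0$, forcing $x_{ji}=x_{ki}=0$, which is excluded by smoothness exactly as above. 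Therefore $M_{ii}=0$, the $i$-th column of $B$ vanishes, and $0=\det B=(\det X_z)^2$ at $P_i$, so $P_i\in C_X$. The statement for $Q_i=l_{ji}\cap l_{ki}$ is symmetric: there $B_{ij}=B_{ik}=0$, the $i$-th row of $B$ collapses to $M_{ii}\,e_i^{\mathsf T}$, and $B X_z=(\det X_z)\,\Id$ forces $M_{ii}=0$ in the same manner. Finally, since $X$ and $A$ are real, the forms $L_{ij}$ have real coefficients, so $P_i$ and $Q_i$ are real points; together with $P_i,Q_i\in C_X$ this places them in the real part $C_\R$, exhibiting $P_1Q_2P_3Q_1P_2Q_3$ as an inscribed hexagon.

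I expect the main obstacle to be the non-degeneracy bookkeeping rather than any single hard step: the adjugate computation itself is clean, but one must track carefully how the off-diagonal entries behave. The delicate point is that genuine parallelism in part 1 needs each $x_{ji}\neq 0$ individually — otherwise the corresponding side runs off to the line at infinity — whereas parts 2 and 3 require only the weaker fact, guaranteed by smoothness via the reducibility argument, that two entries in a common row or column of $X$ never vanish simultaneously. I would therefore fix at the outset the non-degeneracy of $X$ (all off-diagonal entries nonzero) that makes $H_X$ an honest regularly inscribed hexagon, and separately record that the incidence assertion of part 3 persists under the weaker, smoothness-guaranteed hypothesis alone.
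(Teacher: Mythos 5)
Your parts 1 and 2 are correct and essentially coincide with the paper's own argument: the paper proves part 2 by observing that a common line $l_{ij}=l_{ik}$ would contain both $\infty_j$ and $\infty_k$, hence be the line at infinity, forcing $x_{ji}=x_{ki}=0$ and contradicting irreducibility; your proportionality computation with the $(z_1,z_2)$-parts is the same fact in coordinates. For part 3 you do take a genuinely different route. The paper works with the $2\times3$ matrix \eqref{twoByThree} formed by rows $j,k$ of $X_z$: the vanishing of $L_{ij}$ and $L_{ik}$ at $P_i$ says that two pairs of its columns are dependent, its first column is nonzero by irreducibility, hence the matrix has rank one, so all three minors $M_{ii}$, $M_{ij}$, $M_{ik}$ vanish and $\det X_z=0$ follows by expansion along the $i$-th row. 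Your adjugate identity replaces this rank argument by a contradiction scheme; it is a clean mechanism, but it is not equivalent, because it is sensitive to $z_3$ where the paper's argument is not.

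That is exactly where the gap is. Your key inference --- from $x_{ji}z_3M_{ii}=0$ and $x_{ki}z_3M_{ii}=0$ together with $M_{ii}\neq0$ to $x_{ji}=x_{ki}=0$ --- requires $z_3(P_i)\neq0$; you write ``the finite point $P_i$'', but the proposition grants no such thing. If an off-diagonal entry vanishes, say $x_{ji}=0$ (then necessarily $x_{jk}x_{ki}\neq0$, since $L_{ij}\not\equiv0$), the side $l_{ij}$ \emph{is} the line at infinity and $P_i=l_{ik}\cap\{z_3=0\}=\infty_j$ is a point at infinity, where both of your equations are vacuous. Your proposed remedy --- prove the statement assuming all off-diagonal entries are nonzero and ``record'' that part 3 persists in general --- leaves precisely the problematic case unproved, and that case cannot be discarded: the paper later relies on matrices with a vanishing off-diagonal entry (hexagons with a side at infinity) in Proposition \ref{reconstruction} and in the non-triviality argument of Section \ref{topology}. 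Fortunately the patch is short: if $z_3(P_i)=0$, then by part 2 at most one of $l_{ij}$, $l_{ik}$ is the line at infinity, and an affine side meets $\{z_3=0\}$ only at $\infty_k$ (respectively $\infty_j$), so $P_i\in\{\infty_j,\infty_k\}\subset C_X$ by Proposition \ref{scProp}; having disposed of this case, you may assume $z_3(P_i)\neq0$ and your adjugate argument (and its transposed version for $Q_i$) closes the proof. A minor stylistic point: once the assumption $P_i\notin C_X$ has produced a contradiction you are done, so your concluding sentence ``Therefore $M_{ii}=0$, \dots, $\det B=(\det X_z)^2=0$'' is redundant as written.
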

\end{samepage}
\begin{proof}
The first statement is straightforward and follows from Proposition \ref{asym} and formula~\eqref{LIJF} for the function $L_{ij}$. Let us prove the second statement. Assume that $l_{ij} = l_{ik} = l$. Then, since $l_{ij} \parallel l_k$ and $l_{ik} \parallel l_j$, we have $\infty_j, \infty_k \in l$. Therefore, $l$ is the line at infinity. At the same time, it easy to see from formula \eqref{LIJF} that the line $l_{ij} $ is the line at infinity if and only if $x_{ji} = 0$. So, we have $x_{ji} = x_{ki} = 0$. However, if this was so, then the curve $C_X$ would be not smooth but reducible. Therefore, our assumption is false, and $l_{ij} \neq l_{ik}$. The proof of the inequality $l_{ji} \neq l_{ki}$ is analogous.\par
Now, let us prove the third statement. We shall demonstrate that the point $P_1$ lies in the real part of $C_X$. The proof for other points is analogous. First note that $P_1$ is the intersection point of two real straight lines, therefore this point is real. So, it suffices to show that $P_1 \in C_X$. Let $P_1 = (z_1 : z_2 : z_3)$. Then, by definition of the point $P_1$, the first two columns, as well as the first and the last column of the matrix 
\begin{align}\label{twoByThree}
\left(\begin{array}{ccc} x_{21} & x_{22}z_3 + a_2z_1 - z_2  & x_{23}z_3  \\x_{31}  & x_{32}z_3  & x_{33}z_3 + a_3z_1 - z_2 \end{array}\right)
\end{align}
are linearly dependent. Note that the first column of this matrix cannot be zero: if it is zero, then, again, $C_X$ is a reducible curve. Therefore, the rank of this matrix is equal to one, which implies that $\det X_z = 0$, and thus $P_1 \in C_X$. 
\end{proof}
\begin{remark}\label{quadrics}
Note that since the rank of the matrix \eqref{twoByThree} is equal to $1$ at the point $P_1$, the diagonal minor $M_{11}$ of the matrix $X_z$ at $P_1$ is equal to zero. Similarly, $M_{11}$ vanishes at the point $Q_1$. Also note that the zero locus of the minor $M_{11}$ is a quadric whose asymptotes coincide with the asymptotes $l_2$, $l_3$ of the spectral curve. The latter implies that the quadric $M_{11} = 0$ has at most two finite intersection points with the spectral curve, and these points are $P_1$ and $Q_1$. Similarly, $P_i$ and $Q_i$ may be defined as finite intersection points of the quadric $M_{ii} = 0$ with the spectral curve.
\end{remark}
Proposition \ref{propsOfLIJ} implies that with each matrix $X$ such that the corresponding spectral curve $C_X$ is smooth, one can associate a hexagon inscribed in the real part of the spectral curve. We denote this hexagon by $H_X$. The sides of this hexagon are parallel to the asymptotes of the curve $C_X$. In what follows, hexagons with this property are called \textit{regularly inscribed hexagons}. More precisely, we give the following definition:
\begin{definition}\label{rihDef}
Assume that $C$ is a real smooth cubic curve which intersects the line at infinity at real points $\infty_1$, $\infty_2$, and $ \infty_3$. A hexagon \textit{regularly inscribed} in $C$ is six points  $P_1$, $Q_2$, $P_3$, $Q_1$, $P_2$, $Q_3 \in C_\R$ such that for any permutation $(i,j,k)$ of $(1,2,3)$, the third intersection point of the line $P_iQ_j$ with the curve $C$ is $\infty_k$.
\end{definition}
Note that a regularly inscribed hexagon is uniquely determined by any of its vertices.  Indeed, assume that we are given a point $P_1 \in C_\R$. Then we can reconstruct the point $Q_2$ as the intersection of the curve with the line passing through $P_1$  and parallel to the asymptote $l_3$. In a similar way, we reconstruct points $P_3$, $Q_1$, $P_2$, and $Q_3$. If it turns out that $Q_3P_1$ is parallel to the asymptote $l_2$, then we obtain a regularly inscribed hexagon. In what follows, we show that this is always so. This, in fact, is a simple corollary of a classical result about nine points on a cubic, known as Chasles, or Cayley-Bacharach theorem. Thus, there exists exactly one regularly inscribed hexagon with a given vertex $P_1$ (this is, of course, true for other vertices as well). In particular, the set of hexagons regularly inscribed in a curve $C$ can be, in principle, identified with the real part of $C$.\par 
Also note that Definition \ref{rihDef} describes a slightly more general class of hexagons compared to Proposition \ref{propsOfLIJ}. Indeed, setting $P_i = Q_i = \infty_i$ for $i = 1,2,3$, we obtain a regularly inscribed hexagon. In what follows, we shall refer to this hexagon as \textit{the degenerate hexagon} (note that if $P_i = \infty_i$ or $Q_i = \infty_i$ at least for one value of $i$, then the hexagon is automatically degenerate). Since all sides of the degenerate hexagon coincide, the second statement of Proposition \ref{propsOfLIJ} implies that this hexagon does not correspond to any matrix $X$.  As we show below, this situation is exceptional: any other regularly inscribed hexagon corresponds to a $2$-dimensional family of matrices $X$.\par

 \par

 \begin{figure}[t]
 \centering
\begin{tabular}{cc}
\multicolumn{1}{l}{$(0,6)$} & \multicolumn{1}{l}{$(6,0)$}
\\
{\includegraphics[scale = 1.5]{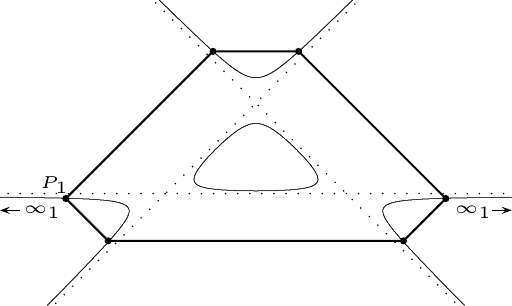}} \quad & \quad {\includegraphics[scale = 1.5]{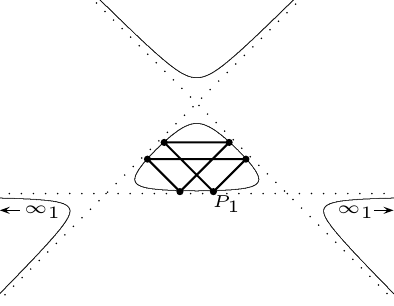}} \\
Blow-up solution & Global solution
\\ \\
 \\
 \multicolumn{1}{l}{$(4,2)$} & \multicolumn{1}{l}{$(2,4)$}
  \\{\includegraphics[scale = 1.5]{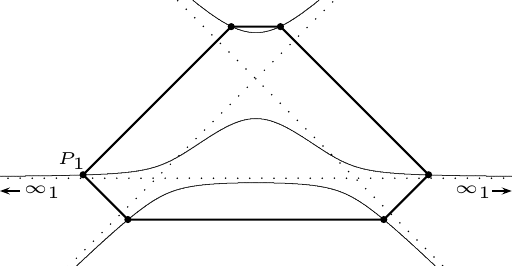}}\quad & \quad{\includegraphics[scale = 1.5]{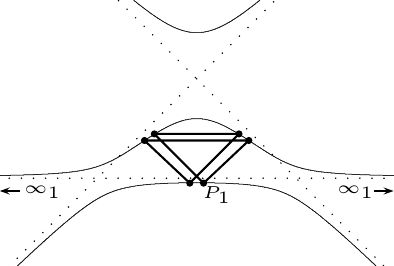}} \\
Blow-up solution & Global solution \\ \\
\end{tabular}
\caption{Different types of regularly inscribed hexagons and the behavior of corresponding solutions}
\label{hex2}
\end{figure}

Now, we need to discuss some topological properties of regularly inscribed hexagons. Note that if the real part of $C$ has two ovals, then different vertices of a regularly inscribed hexagon may lie on different ovals. To distinguish between possible configurations, we use the fact that if the real part of a cubic has two ovals, then exactly one of these two ovals is contractible in $\RP^2$. Namely, if one of the ovals is bounded, then it is contractible, and the other oval is not; if both ovals are unbounded, then the one which intersects the line at infinity at two points is contractible, and the other one is not. 
\begin{definition}
We say that a regularly inscribed hexagon has \textit{type} $(m,n)$ if $m$ of its vertices lie on the contractible oval, and $n$ of its vertices of lie on the non-contractible oval.
\end{definition}
 \begin{proposition}
All possible types of regularly inscribed hexagons are depicted in Figure~\ref{hex2}: if one of the ovals of $C_\R$ is bounded, then $H$ has type $(0,6)$ or $(6,0)$, and if both ovals of $C_\R$ are unbounded, then $H$ has type $(4,2)$ or $(2,4)$.
\end{proposition}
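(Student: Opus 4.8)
The plan is to reduce the statement to a parity computation in the $\mathbb{F}_2$-homology of $\RP^2$. The key tool is the standard fact that $H_1(\RP^2;\mathbb{F}_2)\cong\mathbb{F}_2$ is generated by the class of a projective line, which has self-intersection $1$; since a contractible oval bounds a disk and is therefore null-homologous, while a non-contractible oval represents the generator, any real line meets the contractible oval of $C$ in an even number of real points and the non-contractible oval in an odd number of real points. (The mod-$2$ intersection count equals the naive point count because a line through three distinct points of a smooth cubic is transverse to it at each of them.) For a real point $R\in C_\R$ I would set $\chi(R)=1$ if $R$ lies on the contractible oval and $\chi(R)=0$ otherwise; then for a line meeting $C_\R$ in three distinct real points $R_1,R_2,R_3$ the fact above reads $\chi(R_1)+\chi(R_2)+\chi(R_3)\equiv 0\pmod 2$.

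First I would record $\chi$ at the points at infinity. By Proposition~\ref{scProp} the curve meets the line at infinity in three distinct real points $\infty_1,\infty_2,\infty_3$. If one oval of $C_\R$ is bounded, it is contractible and disjoint from the line at infinity, so all three points lie on the non-contractible oval and $\chi(\infty_1)=\chi(\infty_2)=\chi(\infty_3)=0$. If both ovals are unbounded, then by the description recalled just before the definition of type the contractible oval is the one meeting the line at infinity in two points; hence exactly two of the $\infty_i$ satisfy $\chi(\infty_i)=1$ and one satisfies $\chi(\infty_i)=0$. Next, by Definition~\ref{rihDef} each of the six sides of $H$ is a line through two finite vertices $P_i,Q_j$ and the point $\infty_k$, for $(i,j,k)$ ranging over all permutations of $(1,2,3)$, so the parity identity yields the linear system $\chi(P_i)+\chi(Q_j)+\chi(\infty_k)\equiv 0\pmod 2$ over $\mathbb{F}_2$.

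Solving this system is routine: its six equations have a single linear dependency (their sum), so the solution space is one-dimensional, and all $\chi$-values are determined by one free parameter, namely the oval containing one chosen vertex. When all $\chi(\infty_i)=0$ the equations force $\chi$ to be constant on the six vertices, placing them all on one oval and giving type $(6,0)$ or $(0,6)$. When two of the $\chi(\infty_i)$ equal $1$, the solution distributes the vertices as four on one oval and two on the other, giving type $(4,2)$ or $(2,4)$; the two exceptional vertices are precisely $P_k,Q_k$ for the index $k$ with $\chi(\infty_k)=0$. This rules out every type not listed. To see that each admissible type is actually attained, I would invoke the earlier observation that a regularly inscribed hexagon is determined by a single vertex, so $P_1$ may be placed freely on either oval, and the two choices realize the two types in each case.

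I expect the only real subtlety to lie in the homological lemma rather than in the $\mathbb{F}_2$ linear algebra: one must ensure that the three intersection points of each side are genuinely distinct—which fails only for the degenerate hexagon excluded from consideration—so that the mod-$2$ intersection number coincides with the count of points, and one must correctly identify, in the two-unbounded-ovals case, which oval is contractible. Once these two points are secured, the rest of the argument is the mechanical computation sketched above.
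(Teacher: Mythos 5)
Your core argument---the parity constraint $\chi(P_i)+\chi(Q_j)+\chi(\infty_k)\equiv 0 \pmod 2$ obtained from mod-$2$ intersection theory in $\RP^2$, followed by solving the resulting $\mathbb{F}_2$-linear system---is sound, and it is a legitimate way of making precise what the paper dismisses in one line as ``simple topological considerations.'' Your bookkeeping of the values $\chi(\infty_i)$ in the two cases, the rank computation, and the identification of the exceptional pair $P_k,Q_k$ (with $k$ the index of the point at infinity on the non-contractible oval) are all correct.

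There is, however, one false claim in your treatment of non-generic configurations, and as written it leaves a genuine, though easily fixable, gap: it is \emph{not} true that coincidences among the three intersection points of a side occur only for the degenerate hexagon. Two non-degenerate phenomena arise. First, the line through $P_i$ and $\infty_k$ may be tangent to $C$ at $P_i$, forcing $Q_j=P_i$; lines through $\infty_k$ form a parallel pencil in the affine plane, and (for example) the two support lines of a bounded oval in that direction are tangent to it at finite points, so such hexagons exist and are not degenerate. Second, a non-degenerate hexagon can have a vertex at infinity: only the coincidences $P_i=\infty_i$ or $Q_i=\infty_i$ force degeneracy, and for instance the hexagon $H_0$ with $P_1=\infty_2$, which plays a role in Section~\ref{topology}, has its side $P_1Q_3$ equal to the asymptote $l_2$, tangent to $C$ at $\infty_2$. (Note also that the degenerate hexagon should not be excluded from the statement at all: the proposition concerns \emph{all} regularly inscribed hexagons, and the paper later uses that $H_d$ has type $(0,6)$, respectively $(4,2)$---which, reassuringly, is consistent with your parity constraints.) The gap closes in either of two ways. (a) Strengthen the homological lemma: at a real point where a real line meets $C$ with intersection multiplicity $m$, the local mod-$2$ intersection index equals $m \bmod 2$ (a small real perturbation of the line splits the point into $m$ real points mod $2$), so your identity holds verbatim when points of the intersection divisor $P_i+Q_j+\infty_k$ collide---a tangency contributes $2\chi\equiv 0$. (b) Alternatively, run the counting argument only for hexagons all of whose sides meet $C$ in three distinct points (all but finitely many of them), and extend to the rest by continuity: the vertices depend continuously on $P_1$, the ovals are disjoint compact subsets of $\RP^2$, so the type is locally constant, hence constant, on each connected component of $\mathcal H_C\simeq C_\R$.
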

\begin{proof}
The proof follows from simple topological considerations.

\end{proof}

\subsection{Blow-up and global solutions}
Now, we formulate a theorem which allows one to determine whether a given solution of equation \nolinebreak \eqref{mainEquation} blows up. This result is stated in terms of the type of the hexagon $H_X$.\par  First note that, in contrast to the spectral curve $C_X$, the hexagon  $H_X$ is time-dependent. However, the {type} of this hexagon obviously cannot change with time. Moreover, the type of  $H_X$ stays the same if we vary $X$ within a connected component of the isospectral set $\mathcal T_C$. It turns out that  the topological type of  $H_X$ allows us to distinguish between $\R^3$ and ${S}^1 \times \R^2$ components of $\mathcal T_C$, i.e. between global and blow-up solutions.
\begin{theorem} \label{completenesstheorem}
Assume that  $A$ is a generic diagonal matrix, and let $X \in \gl(3,\R)$ be such that the spectral curve $C_X$ is smooth. Let also $X(t)$ be a solution of equation~\eqref{mainEquation}  with the initial condition $X$. Then the following statements hold.
\begin{enumerate}
\item If the real part of $C_X$ has one oval, then $X(t)$ blows up.
\item If the real part of $C_X$  has two ovals, then $X(t)$ blows up if and only if the hexagon \nolinebreak$H_X$ is of type $(0,6)$ or $(4,2)$; if $H_X$ is of type $(6,0)$ or $(2,4)$, then $X(t)$ exists for all times.

\end{enumerate}
\end{theorem}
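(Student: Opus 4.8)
The plan is to deduce everything from Theorem~\ref{surfacetheorem} together with the observation that the combinatorial type of $H_X$ is a discrete invariant of the connected components of $\mathcal{T}_C$. The first statement is then immediate: if $C_X$ has a single oval, Theorem~\ref{surfacetheorem} says that every component of $\mathcal{T}_C$ is diffeomorphic to $\R^3$, and all solutions on such components blow up. So the entire content lies in the two-oval case, where I must match each of the two behaviours ($\R^3$/blow-up versus $S^1\times\R^2$/global) with a pair of hexagon types.

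First I would record that $H_X$ is invariant under the symmetry group $\mathbb{P}\mathrm{D}^+(3,\R)$. Conjugation by $D=\mathrm{diag}(d_1,d_2,d_3)$ sends $x_{ij}\mapsto (d_i/d_j)\,x_{ij}$, and inspection of formula~\eqref{LIJF} shows that each linear form $L_{ij}$ is then merely rescaled by $d_j/d_i$; hence the lines $l_{ij}$, and with them all six vertices $P_i,Q_i$, are unchanged. Consequently the type of $H_X$ is constant on $\mathbb{P}\mathrm{D}^+$-orbits, and being discrete it is constant on each connected component of $\mathcal{T}_C$ (equivalently, along each solution, as already noted). By Theorem~\ref{surfacetheorem} it therefore suffices to compute the type on a single representative of each topological type of component.

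The key geometric step is to identify analytic blow-up with degeneration of the hexagon. I would argue that a solution leaves every compact subset of $\mathcal{T}_C$ in finite time precisely when its hexagon tends to the degenerate hexagon, i.e. when all vertices approach the points at infinity $\infty_1,\infty_2,\infty_3$. Concretely, since the diagonal of $X$ is fixed by $C_X$ (Proposition~\ref{asym}), a solution blows up if and only if one of the $\mathbb{P}\mathrm{D}^+$-invariant products $c_{jk}:=x_{jk}x_{kj}$ tends to infinity; and by the quadric description of Remark~\ref{quadrics}, writing $c_{jk}=L_jL_k/z_3^2$ at the relevant vertex shows that $c_{jk}\to\infty$ exactly as that vertex approaches the line at infinity, that is, as $H_X$ degenerates. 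Granting this, I would read off the type of the degenerate configuration by continuity: as $P_i,Q_i\to\infty_i$, the limiting type is governed solely by how $\infty_1,\infty_2,\infty_3$ are distributed among the ovals. When one oval is bounded, all three points lie on the non-contractible oval, so the degenerate hexagon has type $(0,6)$; when both ovals are unbounded, two of them lie on the contractible oval and one on the non-contractible oval, giving type $(4,2)$.

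Combining these steps would finish the proof: the $\R^3$ (blow-up) components are exactly those whose hexagons can degenerate, hence of type $(0,6)$ respectively $(4,2)$, while the complementary types $(6,0)$ and $(2,4)$ can never reach the degenerate hexagon (the type being preserved) and so consist entirely of global solutions lying on the $S^1\times\R^2$ components. The step I expect to be the main obstacle is the bridge of the previous paragraph: rigorously excluding any escape to infinity other than hexagon degeneration, in particular ruling out a runaway purely along the $\mathbb{P}\mathrm{D}^+$-orbit directions. Here the cleanest device is the linearization of the flow on the Jacobian of the elliptic curve $C$ (Proposition~\ref{velocityFormula}): it shows that the vertex $P_1(t)$ moves with constant velocity on the real part of $C$, so that the solution blows up exactly when this uniform motion carries $P_1(t)$ through the puncture corresponding to the degenerate hexagon, a puncture which lies on the oval-circle of type $(0,6)$ (resp. $(4,2)$) and is absent from the circle of type $(6,0)$ (resp. $(2,4)$).
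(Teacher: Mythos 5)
Your strategy is essentially the paper's own: treat the hexagon type as a discrete invariant, observe that the degenerate hexagon $H_d$ corresponds to no matrix and has type $(0,6)$ (one bounded oval) or $(4,2)$ (two unbounded ovals), and use the uniform motion of the vertices (Proposition~\ref{velocityFormula}) to force any hexagon whose oval-circle contains $H_d$ to reach it in finite time. This correctly yields the blow-up half: for types $(0,6)$ and $(4,2)$ the solution cannot be prolonged past the hitting time, since its hexagon would then have to equal $H_d$. (Be aware, though, that quoting the dynamical part of Theorem~\ref{surfacetheorem} is circular in the paper's logical order: there, that part is deduced as a corollary of the very argument that proves the present theorem.)

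The genuine gap is in the global-existence half, and it sits exactly at the step you flagged as ``the main obstacle'' without closing it. For types $(6,0)$ and $(2,4)$ you must lift completeness of the projected motion on the circle of hexagons to completeness of $X(t)$ itself, i.e.\ rule out finite-time escape along the fibers of the bundle $\mathcal{T}_C \to \mathcal{H}_C\setminus H_d$ (the $\mathbb{P}\mathrm{D}$-orbit directions). Proposition~\ref{velocityFormula} cannot do this: it governs only the base dynamics and says nothing about the fiber motion. Your proposed bridge, ``$X(t)$ blows up iff some invariant product $c_{jk}=x_{jk}x_{kj}$ tends to infinity,'' presupposes the point at issue: since the $c_{jk}$ are $\mathbb{P}\mathrm{D}$-invariant they are functions of the hexagon alone, so a priori an entry $x_{jk}$ could blow up in finite time while $x_{kj}\to 0$ keeps every $c_{jk}$ bounded --- which is precisely a runaway in the fiber direction, and nothing in your argument excludes it. The paper closes this gap with Lichnerowicz's theorem (Theorem~\ref{lt}): a trajectory of a $G$-invariant vector field on a principal $G$-bundle is complete if and only if its projection to the base is complete. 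Equivalently, one can argue by reconstruction: along a trajectory whose projection stays in a compact subset of a trivializing chart, the fiber motion obeys a non-autonomous ODE whose logarithmic derivative depends only on the base point, e.g.
\begin{equation*}
\frac{\dot x_{12}}{x_{12}}=(a_2-a_1)\left(x_{11}+x_{22}+\frac{x_{13}x_{32}}{x_{12}}\right),
\end{equation*}
where $x_{11},x_{22}$ are constants of motion and $x_{13}x_{32}/x_{12}$ is $\mathbb{P}\mathrm{D}$-invariant; hence the right-hand side is bounded and no entry can blow up in finite time. Without this ingredient your proof establishes blow-up for types $(0,6)$ and $(4,2)$ but not global existence for types $(6,0)$ and $(2,4)$.
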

\begin{example}[Rigid body]\label{rbex} As we mentioned in the Introduction, for skew-symmetric matrices $X$ equation \eqref{mainEquation} becomes the Euler equation governing the motion of a rigid body fixed at the center of mass.  Let us demonstrate how Theorem \ref{completenesstheorem} works in this case.\par
 The equation of the spectral curve, written in affine coordinates $\lambda = z_1 / z_3,$ and $\mu = z_2/ z_3$, is $\det(X + \lambda A - \mu \Id) = 0$. Using that $X^t = -X$, we have
\begin{align*}
 \det(X + \lambda A - \mu \Id) &=  \det((X + \lambda A - \mu \Id)^t) = \\
 &  \det(-X + \lambda A - \mu \Id) =  -\det(X - \lambda A + \mu \Id),
\end{align*}

so, for a skew-symmetric $X$, the spectral curve  $C_X$ is symmetric with respect to the origin. The latter in particular implies that $C_X$ has two ovals both of which are unbounded.
 \begin{figure}[t]
 \centering

  {\includegraphics[scale = 1.5]{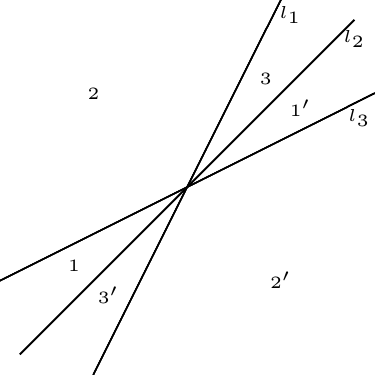}}\caption{Asymptotes of the spectral curve for the rigid body}
\label{asymfig}
\end{figure}

\par
Further, let us show that if the matrix $X$ is skew-symmetric, then the hexagon $H_X$ is of type $(2,4)$.  Figure \ref{asymfig} depicts the asymptotes $l_1$, $l_2$, $l_3$ of the spectral curve and six sectors into which these asymptotes cut the affine plane. The asymptote $l_i$ is given by the equation $\mu = a_i \lambda$ (without loss of generality, we may assume that $a_1 > a_2 > a_3$). According to Remark \ref{quadrics}, the vertex $P_1$ of the hexagon $H_X$ may be found as one of the intersection points of the spectral curve with the quadric $M_{11} = 0$. In affine coordinates, this quadric is given by 
 $$
 (\mu - a_2 \lambda)(\mu - a_3 \lambda) + x_{23}^2 = 0,
 $$
i.e. it is a hyperbola whose branches lie in sectors $\rom{1}$ or $\rom{1}'$. Therefore, the point $P_1$ lies in one of these sectors. The branch of the curve $C_X$ lying in the corresponding sector joins the point $P_1$ with at least one of the points $\infty_2$ or $\infty_3$. So,  $P_1$ lies in the same oval as  $\infty_2$ or $\infty_3$, and thus the hexagon $H_X$ is indeed of type $(2,4)$ (see Figure \ref{hex2}). The latter allows us to conclude that all generic solutions of the rigid body equation exist for all times. This is, of course, very well known (these trajectories are in fact periodic).

\end{example}
\begin{example}[Rigid body in pseudo-Euclidian space] Equation \eqref{mainEquation} may also be restricted to the pseudo-orthogonal Lie algebra $\so(1,2)$, which consists of matrices satisfying the equation $XI + IX^t = 0$
where $I$ is diagonal $I := \mathrm{diag}(1,1,-1)$.
For such matrices $X$, equation \eqref{mainEquation}  may be regarded as the equation of a rigid body in the pseudo-Euclidian space $\R^{1,2}$.\par
As in the Euclidian case, the spectral curve  $C_X$ is symmetric with respect to the origin and thus has two unbounded ovals. The difference is that we can no longer assume that $a_1 > a_2 > a_3$ due to the special role of the first coordinate. 
So, there are two different cases. The first case is when $a_1$ does not lie in the interval $(a_2, a_3)$. Then, repeating the argument of the Euclidian case (see Example \ref{rbex}), one shows that the hexagon $H_X$ is of type $(2,4)$, and thus {all} generic trajectories exist for all times. 
The second case is $a_1 \in (a_2, a_3)$. In this case, a similar argument shows that the hexagon $H_X$ is of type $(4,2)$. Thus, if $a_1$ is between $a_2$ and $a_3$, then all generic trajectories blow up in finite time.
\end{example}

 Now, let us give an informal explanation why Theorem \ref{completenesstheorem} is true. 
First, assume that the real part of the spectral curve $C_X$ has one oval. 
Then, as the matrix $X$ evolves according to equation \eqref{mainEquation}, the hexagon $H_X$, and in particular its vertex $P_1$, slide along the spectral curve $C_X$. At some point of time $t'$, the vertex $P_1$ hits the point $\infty_1$, and the hexagon $H_X$ becomes degenerate. However, as was pointed out above, the degenerate hexagon does not correspond to any matrix $X$. For this reason, the solution $X(t)$ can not be extended to $t = t'$.
\par
 Further, note that if the real part of the spectral curve has two ovals, but the hexagon $H_X$ is of type $(0,6)$ or $(4,2)$, then the points $P_1$ and $\infty_1$ still lie in {the same} oval (see Figure~\ref{hex2}). So, we arrive to exactly the same conclusion as in the one oval case. 
 \par
 Finally, if the real part of $C_X$ has two ovals, and $H_X$ is of type $(6,0)$ or $(2,4)$, then the points $P_1$ and $\infty_1$ lie in different ovals. For this reason, they can never meet each other, and the solution exists for all times.\par
Of course, to turn this explanation into a rigorous proof, one should understand the dynamics of the hexagon $H_X$. It turns out that this dynamics is, roughly speaking, a uniform rotation. More precisely, there exists an angular coordinate $\phi \in [0, 2\pi)$ on each oval of the curve $C_X$ such that the evolution of all vertices of $H_X$ is given by $\diff \phi / \diff t = \mathrm{const} \neq 0$. Thus, if the points $P_1$ and $\infty_1$ lie in the same oval, they are doomed to meet each other. 
\par
Note that this consideration also implies that for each global in time solution of equation \eqref{mainEquation}, the hexagon $H_X$ returns to its initial position after some time $T$. In other words, the evolution of the hexagon $H_X$ is periodic. However, the evolution of the matrix $X$ itself is, in general, not periodic but quasi-periodic. To be more precise, we have the following.
\begin{theorem} \label{phasetheorem}
Assume that $A$ is a generic diagonal matrix, and let $C$ be a smooth real cubic passing through the points $\infty_1$, $\infty_2$, $\infty_3$. Further, assume that the real part of $C$ has two ovals\footnote{Recall that if the real part of $C$ has one oval, then there are no global in time solutions lying in $\mathcal T_C$.}. Then there exist a real number $T > 0$ and a diagonal matrix $M \in \mathbb{P}\mathrm{D}(3, \R)$ such that for each lying in $\mathcal T_C$ global in time solution of equation~\eqref{mainEquation}, the following statements hold.
\begin{enumerate}
\item The dynamics of the hexagon $H_X$ is periodic with period $T$. 
\item The dynamics of the matrix $X$ is quasi-periodic:
$$
X(t + T) = MX(t)M^{-1}.
$$
Furthermore, we have $M \in \mathbb{P}\mathrm{D}^+(3, \R)$ if $H_X$ has type $(6,0)$, and $M \notin \mathbb{P}\mathrm{D}^+(3, \R)$ if $H_X$ has type $(2,4)$.
\end{enumerate}

\end{theorem}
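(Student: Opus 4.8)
The plan is to derive both statements from the linearization of the flow on the Jacobian of the spectral curve, combined with the fact that the hexagon $H_X$ depends only on the $\mathbb{P}\mathrm{D}(3,\R)$-orbit of $X$. I first record this invariance, which is needed throughout. Conjugating by a diagonal matrix $D$ sends $X_z$ to $DX_zD^{-1}$, and this leaves every diagonal minor $M_{ii}$ unchanged: the two off-diagonal entries of the relevant $2\times 2$ block pick up reciprocal factors $d_a/d_b$ and $d_b/d_a$, whose product is $1$. By Remark~\ref{quadrics} the vertices $P_i,Q_i$ are the finite intersection points of $\{M_{ii}=0\}$ with $C_X$, so the entire hexagon $H_X$ is constant along $\mathbb{P}\mathrm{D}(3,\R)$-orbits.

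For statement (1), I would invoke the linearization of the flow (Proposition~\ref{velocityFormula}): there is an angular coordinate $\phi$ on each oval of $C_\R$ in which every vertex of $H_X$ moves with one and the same constant nonzero angular velocity $c$, depending only on $C$. Since each oval is a circle, every vertex is periodic and all vertices return simultaneously; the period $T=2\pi/|c|$ is the same on both ovals because the spectral curve has genus one, so the two components of the real Jacobian are translates of one another and carry the same linear velocity. This yields periodicity of the hexagon with a single period $T$ determined by $C$.

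For statement (2), note that the spectral curve is conserved and, by (1), $H_{X(t+T)}=H_{X(t)}$, so $X(t)$ and $X(t+T)$ share the same pair $(C,H)$. The reconstruction of $X$ from its spectral data carried out in the proof of Theorem~\ref{surfacetheorem} shows that the matrices with a given smooth curve and a given non-degenerate hexagon form a single $\mathbb{P}\mathrm{D}(3,\R)$-orbit; hence $X(t+T)=M(t)X(t)M(t)^{-1}$ for a diagonal $M(t)$. Differentiating this identity and using that $\dot X=[X^2,A]$ is invariant under diagonal conjugation, one obtains $[\dot M M^{-1},X(t+T)]=0$; since $\dot M M^{-1}$ is diagonal while a generic $X$ has nonzero off-diagonal entries, this forces $\dot M=0$, so $M$ is constant. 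The same $M$ works for every solution on the component because $\mathbb{P}\mathrm{D}$ is abelian: replacing $X$ by $DXD^{-1}$ replaces $M$ by $DMD^{-1}=M$.

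The remaining and hardest part is to locate $M=\mathrm{diag}(m_1,m_2,m_3)$ relative to $\mathbb{P}\mathrm{D}^+(3,\R)$, i.e. to decide whether every ratio $m_i/m_j$ is positive. Since $x_{ij}(t+T)=(m_i/m_j)\,x_{ij}(t)$, the coset $[M]\in\pi_0(\mathbb{P}\mathrm{D})\cong(\mathbb{Z}/2)^2$ is read off from the total sign change of each off-diagonal entry over one period, and $x_{ij}$ vanishes exactly when the line $l_{ji}$ degenerates to the line at infinity, that is, when a vertex of $H_X$ crosses one of the points $\infty_1,\infty_2,\infty_3$. Thus the sign of $m_i/m_j$ is governed by the parity of the number of such crossings during one full rotation, a combinatorial count that depends on the cyclic arrangement of the six vertices and the three points at infinity on the ovals, hence on the type of $H_X$. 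Rather than grind through this sign bookkeeping, I expect the cleanest finish is to compare with the component counts already supplied by Theorem~\ref{surfacetheorem}: the group $\mathbb{P}\mathrm{D}$ permutes the four $\mathbb{P}\mathrm{D}^+$-orbits lying over a fixed hexagon, and $[M]$ records the shift between $\mathbb{P}\mathrm{D}^+$-components incurred as $H_X$ makes one loop, so that trivial $[M]$ keeps the four global components distinct while nontrivial $[M]$ glues them in pairs. Since Theorem~\ref{surfacetheorem} gives four $S^1\times\R^2$ components in the one-bounded case (type $(6,0)$) but only two in the two-unbounded case (type $(2,4)$), we conclude $M\in\mathbb{P}\mathrm{D}^+$ precisely for type $(6,0)$ and $M\notin\mathbb{P}\mathrm{D}^+$ for type $(2,4)$.
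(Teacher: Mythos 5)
Your proposal is correct and follows essentially the same route as the paper: periodicity of the hexagon from the linearization of the flow (Proposition~\ref{velocityFormula} and the discussion after it), the relation $X(t+T)=MX(t)M^{-1}$ from the fact that a curve plus a non-degenerate hexagon determines $X$ up to diagonal conjugation (Proposition~\ref{reconstruction}) combined with the $\mathbb{P}\mathrm{D}(3,\R)$-invariance of the flow, and the coset of $M$ in $\pi_0(\mathbb{P}\mathrm{D}(3,\R))$ identified with the monodromy of the bundles $\mathcal T_C^{6,0}\to\mathcal H_C^{6,0}$ and $\mathcal T_C^{2,4}\to\mathcal H_C^{2,4}$, whose (non)triviality is exactly what the proof of Theorem~\ref{surfacetheorem} establishes. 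The only cosmetic differences are that the paper invokes bundle (non)triviality directly instead of re-deriving it from the component counts, and obtains constancy of $M$ from uniqueness of solutions of the ODE rather than by differentiating $M(t)$.
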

\begin{example}[Rigid body revisited]
Let us again consider the case of a skew-symmetric matrix $X$. Then, as follows from considerations of Example \ref{rbex}, the hexagon $H_X$ has type~$(2,4)$. Therefore, $X(t + T) = MX(t)M^{-1}$ where $M \in \mathbb{P}\mathrm{D}(3, \R) \setminus  \mathbb{P}\mathrm{D}^+(3, \R) $. On the other hand, all generic trajectories of the rigid body are closed, so we should have $M^k = \Id$ for a suitable integer $k>0$. Clearly, this is only possible when the diagonal entries of $M$ are equal to $\pm 1$, and thus $M^2 = \Id$. Note that $M$ itself is not $\pm \Id$ since $M \notin \mathbb{P}\mathrm{D}^+(3, \R)$. So, we have
$
X(t + 2T) = X(t)
$,
 i.e. the period of a generic trajectory of the rigid body is \textit{twice} the period of the corresponding hexagon.
\end{example}
\begin{remark}\label{relGen}
Let us comment on the relation between the hexagon $H_X$ and the general approach of the algebro-geometric integration theory. {In this general approach, one considers the eigenvector of the Lax matrix $X_\lambda$ as a line bundle $E$ over the spectral curve. The fiber of the bundle $E$ at the point $(\lambda, \mu)$ is the eigenspace of $X_\lambda$ corresponding to the eigenvalue $\mu$ (one can show that for smooth spectral curves this eigenspace is always one-dimensional, and that the line bundle $E$ extends to the points at infinity). The isomorphism class of the line bundle $E$ defines a point in the Jacobian of the spectral curve. 
The main result of the algebro-geometric integration theory is that the evolution of this point according to the Lax equation is linear with respect to the addition law on the Jacobian (see, e.g., the above-mentioned monograph \cite{babelon}; cf. Proposition \ref{velocityFormula} below). From the latter it, in particular, follows that equation \eqref{mainEquation} can be solved in terms of theta functions (cf.~\cite{Man}).}\par 
{
The relation between the line bundle $E$ and the hexagon $H_X$ is as follows. For any regularly inscribed hexagon $H$, we have
\begin{align}\label{eqDiv}
P_1 + \infty_2 + \infty_3 \sim \infty_1 + P_2 + \infty_3 \sim \infty_1 + \infty_2 + P_3\,,
\end{align}
where the $D_1 \sim D_2$ denotes the linear equivalence of divisors $D_1$, $D_2$. Indeed, the divisor of the function $L_{13} / L_{23}$, where $L_{ij}$ is given by \eqref{LIJF}, is
$
P_1 + \infty_2 - \infty_1 - P_2
$ (see Figure~\ref{hex}). Therefore, $P_1 + \infty_2 \sim \infty_1 + P_2$, and $P_1 + \infty_2 + \infty_3 \sim \infty_1 + P_2 + \infty_3 $, as desired. The proof of the equivalence $ \infty_1 + P_2 + \infty_3 \sim \infty_1 + \infty_2 + P_3$ is analogous. Furthermore, one can show that the line bundle corresponding to divisors \eqref{eqDiv} is isomorphic to the eigenvector bundle $E$. Thus, the data contained in the hexagon $H_X$ and the line bundle $E$ are equivalent. However, it turns out that it is easier to read off the information about the matrix $X$ from the corresponding hexagon rather than from the line bundle. }

\end{remark}

\section{Proofs of the main results}\label{proofs}
\subsection{Regularly inscribed hexagons and Chasles' theorem}\label{rihCht}
In this section we prove that for any point lying in the real part of a cubic curve, there exists a unique regularly inscribed hexagon whose vertex $P_1$ is at that point. Of course, there is nothing special about the vertex $P_1$, so the result is also true for any other vertex.\par
Let $C$ be a smooth real cubic curve, and assume that $C$ intersects the line at infinity at three real points $\infty_1, \infty_2, \infty_3$. Then $C$ has three real asymptotes $l_1, l_2, l_3$.  
Take any point $P_1 \in C_\R$ and consider the line passing through $P_1$  and parallel to $l_3$, that is the line passing through $P_1$ and \nolinebreak$\infty_3$ (if $P_1 = \infty_3$, then the line passing through $P_1$ and \nolinebreak$\infty_3$ is, by definition, the tangent line to $C$ at $\infty_3$, i.e. the asymptote $l_3$). By B\'{e}zout's theorem, this line should meet the curve $C$ at one more point which we call $Q_2$ (see Figure \ref{hex}). Clearly, $Q_2 \in C_\R$. Now, consider the line passing through $Q_2$ and parallel to \nolinebreak$l_1$, and denote its third intersection point with $C$ by $P_3$. Continuing this procedure, we obtain points $P_1,Q_2,P_3,Q_1,P_2,Q_3,P_1^* \in C_\R$ such that
$$
P_1Q_2 \parallel l_3, \quad Q_2P_3 \parallel l_1,  \quad P_3Q_1 \parallel l_2,  \quad Q_1P_2 \parallel l_3, \quad P_2Q_3 \parallel l_1, \quad Q_3P_1^* \parallel l_2.   
$$
Now, we need to show that the points $P_1^*$ and $P_1$ in fact coincide, so that the polygon $P_1Q_2P_3Q_1P_2Q_3$ is a regularly inscribed hexagon. For simplicity, assume that the nine points $$P_1, Q_2,P_3,Q_1,P_2,Q_3, \infty_1, \infty_2, \mbox{ and } \infty_3$$ are pairwise distinct (the general case follows by continuity). We shall apply the following classical result, known as Chasles', or Cayley-Bacharach theorem:
\begin{theorem}\label{cht}
Let $C_1,C_2,$ and $C_3$ be three cubic curves in $\CP^2$. Assume that $C_1$ intersects $C_2$ at nine distinct points $R_1, \dots, R_9$, and that $C_3$ passes through eight of these nine points. Then $C_3$ also passes through the ninth point.
\end{theorem}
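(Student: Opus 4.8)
The plan is to prove the theorem by a dimension count in the space of cubics, reducing everything to a single independence statement. Write $V$ for the $10$-dimensional vector space of homogeneous cubic forms in the coordinates $z_1, z_2, z_3$; the condition that a cubic vanish at a fixed point is one linear equation on $V$. Since $C_1$ and $C_2$ meet in only finitely many points, neither of the two defining forms divides the other, so they are linearly independent in $V$; as both vanish at $R_1, \dots, R_8$, the subspace $W \subseteq V$ of cubics through these eight points satisfies $\dim W \geq 2$. Every cubic in the pencil $\langle C_1, C_2 \rangle$ also vanishes at $R_9$. Hence the whole theorem reduces to the claim that the eight points $R_1, \dots, R_8$ impose \emph{independent} linear conditions on $V$, i.e. $\dim W = 2$: granting this, $W = \langle C_1, C_2 \rangle$, so the cubic $C_3 \in W$ is a linear combination of $C_1$ and $C_2$ and therefore vanishes at $R_9$.

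First I would record two positional facts about the nine intersection points, both via B\'ezout's theorem. No four of them are collinear: a line through four of the $R_m$ would meet each cubic in at least four points, hence (a line meeting a cubic in more than three points is a component of it) would be a common component of $C_1$ and $C_2$, making the intersection infinite and contradicting the nine distinct points. Similarly, no seven of the nine lie on a conic: seven points on an irreducible conic would meet each cubic in more than six points, making that conic a common component of $C_1$ and $C_2$ and again giving an infinite intersection, while seven points on a reducible conic would place at least four of them on one of its two line components, which we have just excluded. In particular both properties hold for any eight of the nine points.

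With these facts in hand, the core step is to show independence, which amounts to producing, for each index $i \in \{1, \dots, 8\}$, a cubic vanishing at the other seven points but not at $R_i$. I would build such a cubic as a union $\ell \cup Q$ of a line and a conic: partition the seven points $\{R_j : j \neq i\}$ into a pair together with a quintuple, let $\ell$ be the line through the pair and $Q$ the conic through the quintuple, and choose the partition so that neither passes through $R_i$. Here $\ell$ avoids $R_i$ unless $R_i$ is collinear with the chosen pair, and $Q$ avoids $R_i$ unless the chosen quintuple together with $R_i$ lies on a common conic; the no-four-collinear and no-seven-on-a-conic constraints guarantee that a valid partition exists. Organizing this selection by a short case analysis according to how the points are distributed is the part that requires genuine care, and it is the main obstacle in the argument.

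Once independence is established we get $\dim W = 10 - 8 = 2$, so $W$ coincides with the pencil $\langle C_1, C_2 \rangle$, and the conclusion $R_9 \in C_3$ follows as above. Finally, I would handle degenerate configurations (coincidences among the points, or tangential intersections) by a limiting argument, exactly as the main text reduces the general hexagon construction to a continuity argument.
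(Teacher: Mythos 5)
The first thing to note is that the paper itself does not prove Theorem \ref{cht}: it invokes it as the classical Chasles/Cayley--Bacharach theorem and only \emph{applies} it to the two triples of lines through the hexagon's vertices. So there is no ``paper proof'' to match, and your attempt must stand on its own. Its skeleton is the standard dimension-count route: work in the $10$-dimensional space $V$ of cubic forms, observe that the forms of $C_1$, $C_2$ span a $2$-dimensional subspace of the space $W$ of cubics through $R_1,\dots,R_8$, note that every member of the pencil vanishes at $R_9$, and reduce the theorem to the claim that the eight points impose independent conditions, i.e.\ $\dim W = 2$. Your two positional facts are correct and correctly proved by B\'ezout (including the pigeonhole treatment of a reducible conic).

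The genuine gap sits exactly at the heart of the argument: the independence claim is asserted, not proved. You reduce it to producing, for each $i$, a cubic $\ell \cup Q$ through the seven points $S = \{R_j : j \le 8,\ j \neq i\}$ that misses $R_i$, and then state that the no-four-collinear and no-seven-on-a-conic constraints ``guarantee that a valid partition exists,'' explicitly calling the required case analysis the main obstacle --- and skip it. Without that step the argument establishes nothing beyond the trivial bound $\dim W \ge 2$. The claim is true, and can be closed quickly: call a pair $\{A,B\} \subset S$ \emph{bad} if $R_i$ lies on the line $AB$. Two bad pairs cannot share a point (otherwise four of the nine points would be collinear), so the bad pairs are pairwise disjoint, there are at most three of them, and some $A \in S$ lies in no bad pair. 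For $X \in S \setminus \{A\}$ let $Q_X$ be the conic through the quintuple $S \setminus \{A,X\}$; it exists and is \emph{unique} because no four of those five points are collinear --- a uniqueness fact your sketch also relies on silently. If $Q_B$ and $Q_C$ both contained $R_i$ for distinct $B, C$, they would share the five points $(S \setminus \{A,B,C\}) \cup \{R_i\}$, no four collinear, hence coincide; that single conic would then contain $(S \setminus \{A\}) \cup \{R_i\}$, i.e.\ seven of the original points, contradicting your second positional fact. So some $Q_X$ misses $R_i$, and $\ell = AX$ together with $Q = Q_X$ gives the desired cubic. Finally, your closing remark about handling degenerate configurations by a limiting argument is vacuous here: the theorem hypothesizes nine \emph{distinct} intersection points, so by B\'ezout all intersection multiplicities equal $1$ and no degenerate case arises; the continuity argument in the paper concerns the hexagon construction, not Theorem \ref{cht} itself.
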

To apply this result in our setting, consider the curves $$C_1 = C,\quad C_2 = P_1Q_2 \cup P_3Q_1 \cup P_2Q_3, \mbox{ and } C_3 = Q_2P_3 \cup Q_1P_2 \cup Q_3P_1^*.$$ 
Then $$C_1 \cap C_2 = \{ P_1,Q_2,P_3,Q_1,P_2,Q_3, \infty_1, \infty_2, \infty_3\},$$ 
and thus $C_3$ passes through all points of $C_1 \cap C_2$ except, possibly, $P_1$. Therefore, by Theorem~\ref{cht}, the curve $C_3$ also passes through the point $P_1$, so $$
P_1 \in C_1 \cap C_3 =  \{ Q_2,P_3,Q_1,P_2,Q_3,P_1^*, \infty_1, \infty_2, \infty_3\},
$$
and hence $P_1 = P_1^*$, q.e.d. \par
\subsection{Reconstructing a matrix from a curve and a hexagon}
In this section we show that a matrix $X$ can be reconstructed from the spectral curve $C_X$ and the hexagon $H_X$ uniquely up to conjugation by diagonal matrices. For the sake of simplicity, we shall assume that the spectral curve satisfies the following genericity assumptions:
i) it does not have inflection points at infinity,
ii) intersection points of asymptotes do not lie on the curve.
It is easy to see that under these assumptions every regularly inscribed hexagon which is not degenerate has at most one side at infinity. The proof of the general case is similar, but one needs to consider more cases.
\begin{proposition}\label{reconstruction} Let $C$ be a smooth real projective cubic passing through the points $\infty_1$, $\infty_2$, $\infty_3$, and satisfying the above genericity assumptions.
Let also $H \subset C$ be a regularly inscribed hexagon which is not degenerate. Then there exists a matrix $X \in \gl(3,\R)$, unique up to conjugation by a diagonal matrix, such that $C_X = X$, and $H_X = H$.
\end{proposition}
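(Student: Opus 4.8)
The statement asks that $(C,H)$ determine a matrix $X$ with spectral curve $C_X=C$ and hexagon $H_X=H$, uniquely up to diagonal conjugation. The backbone of both existence and uniqueness is the elementary expansion
$$
\det X_z = L_1L_2L_3 - z_3^2\left(rL_1 + qL_2 + pL_3\right) + z_3^3(s+t),
$$
where $L_i = x_{ii}z_3 + a_iz_1 - z_2$ and
$$
p = x_{12}x_{21},\quad q = x_{13}x_{31},\quad r = x_{23}x_{32},\quad s = x_{12}x_{23}x_{31},\quad t = x_{13}x_{21}x_{32},\qquad st=pqr.
$$
Since conjugation by $D=\mathrm{diag}(d_1,d_2,d_3)$ acts by $x_{ij}\mapsto (d_i/d_j)x_{ij}$ and fixes the diagonal, the numbers $x_{11},x_{22},x_{33}$ together with $p,q,r,s$ (and $t$ recovered from $st=pqr$) form a complete set of invariants of a matrix with nonzero off-diagonal entries modulo diagonal conjugation. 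The whole proof thus reduces to reading these invariants off the pair $(C,H)$ and inverting.

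First I would settle uniqueness. By Proposition~\ref{asym} the diagonal entries are exactly the $\mu$-intercepts of the asymptotes of $C$, hence determined by $C$. Next, by Remark~\ref{quadrics} the diagonal minor $M_{11}=L_2L_3-rz_3^2$ vanishes at the vertices $P_1,Q_1$; as $L_2,L_3$ are already known, imposing $M_{11}(P_1)=0$ pins down $r$, and symmetrically $q$ from $M_{22}(P_2)=0$ and $p$ from $M_{33}(P_3)=0$. Finally, by formula~\eqref{LIJF} the side $P_1Q_2=l_{12}$ has intercept $s/p$, which fixes $s$. So $(C,H)$ determines $x_{ii}$ and $p,q,r,s$, and any two realizing matrices share all invariants; writing $u_{ij}:=x'_{ij}/x_{ij}$, the cocycle conditions $u_{ij}u_{ji}=1$ and $u_{ij}u_{jk}=u_{ik}$ needed to solve $d_i/d_j=u_{ij}$ are precisely the equalities of $p,q,r,s$, so a real $D$ exists. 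Non-degeneracy of $H$ forces all $x_{ij}\neq 0$ (as in Proposition~\ref{propsOfLIJ}), legitimizing every division.

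For existence I would run the same procedure constructively: take $x_{ii}$ from the asymptotes; define $p,q,r$ by demanding that $L_2L_3-rz_3^2$, $L_1L_3-qz_3^2$, $L_1L_2-pz_3^2$ pass through $P_1,P_2,P_3$; read off $s$ (hence $t=pqr/s$) from the intercept of the side $P_1Q_2$; and recover the six off-diagonal entries from $p,q,r,s$ after normalizing two of them by a diagonal conjugation. Realness of $H$ makes all these numbers real, and non-degeneracy with the genericity assumptions keeps them nonzero, so $X$ is well defined.

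The main obstacle is proving that this $X$ actually realizes the data, i.e. that $C_X=C$ and $H_X=H$. The difficulty is that the input is overdetermined: the construction uses only three vertices and a single side, so one must check that the remaining vertices $Q_1,Q_2,Q_3$ and the other five sides fall into place, and that the constructed cubic $\det X_z$—which a priori shares with $C$ only the three points at infinity, the three asymptotes, and the vertices $P_1,P_2,P_3$—coincides with $C$. Both checks reduce to divisor identities on $C$, for instance
$$
P_1 + Q_1 + 2\infty_2 + 2\infty_3 \sim 2\mathfrak{H}, \qquad P_1 + \infty_2 + \infty_3 \sim \infty_1 + P_2 + \infty_3,
$$
where $\mathfrak{H}$ is the line class; these hold precisely because $H$ is regularly inscribed, following from the collinearities $P_i,Q_j,\infty_k$ and $\infty_1,\infty_2,\infty_3$ through the Chasles / Cayley–Bacharach theorem (Theorem~\ref{cht}), exactly as in Section~\ref{rihCht} and Remark~\ref{relGen}. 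Granted these linear equivalences, the conic $M_{11}=0$ through $P_1$ automatically passes through $Q_1$, the spectral curve passes through all six vertices, and agreement at these nine points forces $C_X=C$; the incidences then give $H_X=H$. This Cayley–Bacharach compatibility is the heart of the argument.
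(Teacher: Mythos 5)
Your framework is genuinely different from the paper's (the paper solves for the entries of $X$ from the equations of four consecutive sides of $H$, after normalizing $x_{31},x_{32}$, and then uses that two cubics with common asymptotes and three common non-collinear finite points coincide), and in the generic case your uniqueness argument via the invariants $x_{ii},p,q,r,s,t$ is correct and clean. But there are two genuine gaps. First, the claim that non-degeneracy of $H$ forces all $x_{ij}\neq 0$ is false: under the stated genericity assumptions a non-degenerate regularly inscribed hexagon may have one side at infinity --- this is exactly the second case in the paper's proof --- and then for any realizing matrix one off-diagonal entry vanishes (e.g.\ $x_{13}=0$ when $l_{31}$ is the line at infinity, by the proof of Proposition~\ref{propsOfLIJ}), two vertices sit at infinity ($P_3=\infty_1$, $Q_1=\infty_3$), and $q=t=0$. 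Your recipe then collapses: the condition ``$M_{33}(P_3)=0$'' holds for every $p$, the divisions by $q$ and by $z_3(P_3)$ are illegitimate, and the cocycle argument for uniqueness no longer applies as stated. This case is not exotic and must be treated separately (it can be: once one knows $x_{13}=0$, the invariants $p,r,s$ still separate diagonal orbits).

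Second, and more seriously, the step you yourself call the heart of the argument --- that the constructed cubic passes through all six vertices and hence $C_X=C$ --- is asserted rather than proved, and the two displayed divisor identities do not yield it without a further derivation which is the actual content. Concretely: write the two sides of $H$ parallel to $l_k$ as $L_k-c_kz_3=0$ and $L_k-c_k'z_3=0$; then your definitions give $r=c_2'c_3$, $q=c_1'c_3'$, $p=c_1c_2$, $s=c_1c_2c_3$, $t=c_1'c_2'c_3'$, the constructed cubic vanishes at $P_1$ automatically, and its vanishing at the remaining five vertices is \emph{equivalent} to the multiplicative relations $c_1c_2=c_1'c_2'$ and $c_1c_3=c_1'c_3'$. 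These relations do hold and can be proved in your spirit --- for instance, the identity $P_2+Q_2\sim 2\infty_2$ (which follows from the side collinearities) forces the conic $L_1L_3-c_1c_3z_3^2$, which meets $C$ doubly at $\infty_1,\infty_3$ and passes through $Q_2$, to meet $C$ again exactly at $P_2$, whence $c_1'c_3'=c_1c_3$ --- but nothing of this sort appears in your text, and without it the existence proof is circular. Moreover, the closing principle ``agreement at these nine points forces $C_X=C$'' is false for general cubics (nine points can be the base locus of a pencil); what actually does the job is the rigidity implicit in your own expansion: $F-\det X_z=z_3^2\cdot(\mbox{linear form})$, so three non-collinear common finite points suffice, and the non-collinearity itself needs a word (the paper extracts it from Proposition~\ref{propsOfLIJ}).
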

\begin{proof}
First note that the spectral curve $C_X$ and the hexagon $H_X$ are invariant under transformations $X \mapsto DXD^{-1}$ where $D$ is a diagonal invertible matrix. Therefore, uniqueness up to conjugation by diagonal matrices is the best result we can expect.\par
Now, let us show how to reconstruct $X$ from $C$ and $H$. Note that by Proposition \ref{asym} the diagonal entries $x_{11}, x_{22}, x_{33}$ of the matrix $X$ are uniquely determined by the spectral curve. So, we only need to reconstruct the off-diagonal terms. \par
 First, assume that none of the sides $H$ are at infinity. This also implies that none of the sides of $H$ are asymptotes of $C$. In terms of the matrix $X$ to be constructed, these conditions mean that all off-diagonal terms of $X$ do not vanish (cf. formula \eqref{LIJF}). Take any two non-zero real numbers $\alpha,\beta \in \R^*$, and set $x_{31} = \alpha, x_{32} = \beta$. Now, we find the remaining entries of $X$ by using the equations of sides of $H$. First, we find $x_{12}$ from the equation of $l_{23}$. Since the side $l_{23}$ is parallel to the asymptote $l_1$, its equation has the form
$$
a(x_{11}z_3 + a_1z_1 - z_2 ) - bz_3 = 0
$$
where $a,b$ are constants. Note that since $l_{23}$ is neither the line at infinity, nor an asymptote, these constants do not vanish. On the other hand, in terms of the matrix $X$ to be constructed, the equation of $l_{23}$ should be
$$
x_{32}(x_{11}z_3 + a_1z_1 - z_2 )  - x_{31}x_{12}z_3  = 0,
$$
so 
$$
\frac{x_{32}}{a} = \frac{x_{31}x_{12}}{b},
$$
which allows us to find $x_{12} $. In a similar way, we find $x_{21}$ from the equation of $l_{13}$, then $x_{23}$ from $l_{12}$, and, finally, $x_{13}$ from $l_{32}$.\par
Now, let us show that the curves $C_X$ and $C$ are the same. As follows from the construction of $X$, the hexagons $H_X$ and $H$ have four common consecutive sides $l_{23}$, $l_{13}$, $l_{12}$, and $l_{32}$. Therefore, they have three common vertices $Q_3, P_1$, and $Q_2$. By Proposition~\ref{propsOfLIJ}, these points are not collinear. So, the curves $C$ and $C_X$ are two cubics which have common asymptotes and three common non-collinear points. As it is easy to see, such cubics must coincide. This, in turn, implies that the hexagons $H$ and $H_X$ coincide as well: they are two hexagons which are regularly inscribed in the same cubic and have a common vertex~$P_1$.\par
Now, assume that $H$ has one side at infinity. Without loss of generality, it is side $l_{31}$. Then two adjacent sides of $H$ are necessarily asymptotes of the curve, namely $l_{32} = l_1$, and $l_{21} = l_3$. Similarly to the first case, we fix $\alpha,\beta \in \R^*$ and set $x_{31} = \alpha, x_{32} = \beta$. Apart from this, since the side $l_{31}$ is at infinity, we should set $x_{13} = 0$  (see the proof of Proposition~\ref{propsOfLIJ}). Further, similarly to the above, we find $x_{12}$, $x_{21}$, $x_{23}$ using the equations of $l_{23}$, $l_{13}$, and $l_{12}$ respectively. Note that since $x_{13} = 0$, the sides $l_{32}$ and $l_{21}$ of the hexagon $H_X$ automatically coincide with the corresponding sides of $H$, that is with asymptotes $l_1, l_3$. So, $H_X$ and $H$ have five sides in common and thus, similarly to the above, $C_X$ coincides with $C$, and $H_X$ coincides with $H$.\par
So, we showed that in both cases a matrix $X$ satisfying the desired conditions is uniquely determined by its non-zero entries $x_{31}, x_{32}$. Conjugating such a matrix with a suitable diagonal matrix, we can always assume that $x_{31} = 1$ and $x_{32} = 1$. Therefore, $X$ is indeed unique up to conjugation by a diagonal matrix.

\end{proof}
\subsection{Description of isospectral sets}\label{topology}
In this section, we prove the topological part of Theorem \ref{surfacetheorem}. Namely, we describe the topology of the sets $\mathcal T_C = \{X \in \gl(3,\R) : C_X = C\}$.\par
Let $\mathcal H_C$ be the set of hexagons regularly inscribed in the curve $C$. Let also $H_d \in \mathcal H_C$ be the degenerate hexagon. Proposition \ref{reconstruction} allows us to conclude that the set $\mathcal T_C$ is the total space of a principal  $\mathbb P\mathrm D(3,\R)$ bundle over $\mathcal H_C \setminus H_d$. Also note that, according to Section \ref{rihCht}, the space $\mathcal H_C \setminus H_d$ can be identified with the real part of the curve $C$ without a point.\par

A trivializing cover for the bundle $\mathcal T_C \to \mathcal H_C \setminus H_d$ can be constructed as follows. Let
$$
U_1 = \{ H \in \mathcal H_C : l_{13} \neq l_2 \}, \quad U_2 = \{ H \in \mathcal H_C : l_{31} \neq l_2 \}.
$$

It it easy to see that, under the genericity assumptions of the previous section, 
$\{U_1, U_2\}$ is indeed a cover of $\mathcal H_C \setminus H_d$.  
As coordinates on fibers over $U_1$, we take the entries $x_{21}$ and $x_{32}$ of the matrix $X$. Since $l_{13} \neq l_2$, these entries are non-zero (cf. formula \eqref{LIJF}), and thus they uniquely determine a matrix $X$ within its $\mathbb P\mathrm D(3,\R)$ orbit. Similarly, we take  $x_{12}$ and $x_{23}$ as coordinates on fibers over $U_2$.\par
Now, let us prove the first statement of Theorem \ref{surfacetheorem}. If the real part $C_\R$ of the curve $C$ has one oval, then the set $\mathcal H_C \simeq C_\R$ is diffeomorphic to a circle ${S}^1$. Therefore, the set $\mathcal H_C \setminus H_d $ is diffeomorphic to $\R$, and hence $\mathcal T_C \to \mathcal H_C \setminus H_d$ is a trivial bundle: $\mathcal T_C \simeq \R \times \mathbb P\mathrm D(3,\R).$ The latter set has four connected components diffeomorphic to $\R^3$, q.e.d.\par
Further, let us prove the second statement. Assume that the real part of $C$ has two ovals one of which is bounded. In this case, the set $\mathcal H_C$ has two connected components distinguished by the type of a hexagon. Let
$$
 \mathcal H_C^{i,j} := \{ H \in \mathcal H_C : H \mbox{ is of type } (i,j) \}, \quad  \mathcal T_C^{i,j} :=  \{ X \in  \mathcal T_C: H_X \in \mathcal H_C^{i,j} \}.
 $$
Then
 $$\mathcal H_C = \mathcal H_C^{6,0} \sqcup \mathcal H_C^{0,6}, \mbox{ and } \mathcal T_C = \mathcal T_C^{6,0} \sqcup \mathcal T_C^{0,6}.$$ Note that since the degenerate hexagon $H_d$ has type $(0,6)$, we have  $$\mathcal H_C \setminus H_d = \mathcal H_C^{6,0} \sqcup (\mathcal H_C^{0,6} \setminus H_d).$$
Therefore, the set $\mathcal T_C^{0,6}$ is the total space of a principal $\mathbb P\mathrm D(3,\R)$ bundle over $\mathcal H_C^{0,6} \setminus H_d$.
 Since the latter set $\mathcal H_C^{0,6} \setminus H_d$  is diffeomorphic to $\R$, this bundle is trivial, and thus $\mathcal T_C^{0,6} \simeq 4\R^3.$ \par Now, let us study the component $\mathcal T_C^{6,0}$. By definition of the set $\mathcal H_C^{6,0}$, all vertices of any hexagon $H_X \in \mathcal H_C^{6,0}$ lie on the bounded oval, and therefore none of the vertices are at infinity. In particular, the side $l_{13}$ of $H_X$ cannot coincide with the asymptote $l_2$. So, the component $ \mathcal H_C^{6,0}$ is completely covered by the chart $U_1$, and thus the bundle $\mathcal T_C^{6,0} \to \mathcal H_C^{6,0}$ is also trivial. Since $\mathcal H_C^{6,0} \simeq {S}^1$, we have $\mathcal T_C^{6,0} \simeq 4{S}^1 \times \R^2$, q.e.d.\par
Finally, let us prove the third statement of Theorem \ref{surfacetheorem}. Assume that the real part of $C$ has two unbounded ovals. Then $$\mathcal H_C \setminus H_d = \mathcal H_C^{2,4} \sqcup (\mathcal H_C^{4,2} \setminus H_d).$$
Similarly to the above, we have $\mathcal H_C^{4,2} \setminus H_d \simeq \R$, therefore the bundle $\mathcal T_C^{4,2} \to \mathcal H_C^{4,2} \setminus H_d $ is trivial, and  $\mathcal T_C^{4,2} \simeq 4\R^3$. 
\par
Now, let us show that the bundle $\mathcal T_C^{2,4} \to \mathcal H_C^{2,4}$ is \textit{not} trivial. Note that there are only two different principal $\mathbb P\mathrm D(3,\R)$ bundles over the circle ${S}^1$, namely the trivial one with the total space $ 4{S}^1 \times \R^2$, and the non-trivial one with the total space  $ 2{S}^1 \times \R^2$. So, as soon as we prove that the bundle is non-trivial, the topology of the total space is uniquely determined.
 \par
  \begin{figure}[t]
 \centering
\includegraphics[scale = 1.5]{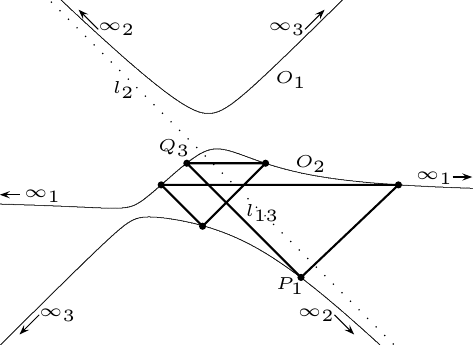} 
\quad\par
\caption{}
\label{hexasymptote}
\end{figure}
 Without loss of generality, we may assume that the points $\infty_2, \infty_3$ lie on the contractible oval $O_1$, and that the point $\infty_1$ lies on the non-contractible oval $O_2$ (if this is not so, we renumber these points). Then $P_1 \in O_1$, and $ Q_3 \in O_2$, as depicted in Figure \nolinebreak\ref{hexasymptote}. Let us consider the intersection of the chart $U_1$ with the component  $\mathcal H_C^{2,4}$. The complement to $U_1$ consists of those hexagons for which $l_{13} = l_2$. The latter is possible if either $P_1 = \infty_2$, or  $Q_3 = \infty_2$. However, for $H_X \in \mathcal H_C^{2,4}$, the points $Q_3$ and  $\infty_2$ lie on different ovals (see Figure \nolinebreak\ref{hexasymptote}), therefore the chart $U_1$ covers the whole set $\mathcal H_C^{2,4} $ except one hexagon $H_0$ distinguished by the condition $P_1 = \infty_2$. This hexagon can be obtained by moving the point $Q_3$ in Figure \ref{hexasymptote} to the right till it reaches the asymptote $l_2$. The domain $\mathcal H_C^{2,4} \setminus H_0$ is covered by the chart $U_1$, therefore in this domain we have a trivialization of the bundle given by $x_{21}, x_{32}$. Let us analyze what happens to these coordinates as the hexagon $H_X$ passes through $H_0$, or, which is the same, as the vertex $Q_3$ crosses the asymptote $l_2$. First, note that the side $l_{12}$ of the hexagon $H_0$ is at infinity, therefore for this hexagon we have $x_{21} = 0$ (see the proof of Proposition \ref{propsOfLIJ}), which shows that the trivialization $\{x_{21}, x_{32}\}$ is no longer valid for $H_X = H_0$. However, since $H_0$ has only one side at infinity, we have $x_{32} \neq 0$. Therefore, to determine whether the bundle  $\mathcal T_C^{2,4} \to \mathcal H_C^{2,4}$ is trivial, we should study what happens to the sign of $x_{21}$ as the point $Q_3$ crosses the asymptote $l_2$. By formula~\eqref{LIJF}, the side $l_{13}$ is given by the equation
 $$
x_{31}(x_{22}z_3 + a_2z_1 - z_2 )  - x_{21}x_{32}z_3  = 0.
$$
At the same time, $x_{22}z_3 + a_2z_1 - z_2 = 0$ is the equation of the asymptote $l_2$. Therefore, provided that $x_{31} \neq 0$, i.e. that the side $l_{13}$ is not at infinity, the sign of the product $x_{21}x_{32}$ has the following geometric meaning: it is positive if the line $l_{13}$ lies on one side of the asymptote $l_2$, and negative if it lies at the other side. Now, notice that as $Q_3$ crosses the asymptote $l_2$, the line $l_{13}$ gets from one side of the asymptote to the other (see Figure \ref{hexasymptote}), therefore the sign of the product $x_{21}x_{32}$ changes to its negative. Since $x_{32}$ does not vanish as  $Q_3$ crosses the asymptote, this means that the sign of $x_{21}$ changes, and therefore the bundle is non-trivial. So, we have $\mathcal T_C^{2,4} \simeq 2{S}^1 \times \R^2$, thus the third statement of Theorem~\ref{surfacetheorem} is proved.
 \subsection{Dynamics of the hexagon}\label{hexDyn}
Let us fix the spectral curve $C$ and describe the evolution of the hexagon $H_X$ under the flow \eqref{mainEquation}.
As before, consider the matrix $ X_z =  z_3 X +  z_1 A - z_2 \Id. $ Recall that the zero locus of the determinant of this matrix is the spectral curve $C$. We shall describe the dynamics of the hexagon $H_X$ by relating its vertices to eigenvectors of $X_z$. Note that since a regularly inscribed hexagon is uniquely determined by any of its vertices, it suffices to describe the dynamics of one vertex. As before, let $M_{ij}$ be the $(i,j)$ minor of $X_z$. Then the vector $(M_{11}, -M_{12}, M_{13})^\mathrm{t}$ belongs to the kernel of $X_z$. Normalizing this vector by dividing its components by $M_{11}$, we obtain the following meromorphic vector-function $\psi$ on $C$:
\begin{align*}
\psi := 
\left(\!\!
\begin{aligned}
&\,\,\,1\\
-&\dfrac{M_{12}}{M_{11}}\\
&\dfrac{M_{13}}{M_{11}}
\end{aligned}
\right)
\end{align*}
Denote the components of this vector by $\psi_1, \psi_2, \psi_3$. By construction, we have $X_z \psi = 0$ identically on $C$. 
\begin{proposition}\label{hProps}
The component $\psi_2$ of $\psi$ has poles at $\infty_2$ and $Q_1$ and zeros at $\infty_1$ and $Q_2$; the component $\psi_3$ has poles at $\infty_3$ and $Q_1$ and zeros at $\infty_1$ and $Q_3$.
\end{proposition}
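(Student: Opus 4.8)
The plan is to realize $\psi_2$ and $\psi_3$ as ratios of homogeneous forms on $\CP^2$ and to compute their divisors on $C$ by intersecting the corresponding lines and conics with the cubic, using B\'ezout's theorem together with the defining property of a regularly inscribed hexagon (Definition \ref{rihDef}). Since the off-diagonal minors factor as $M_{12} = z_3 L_{12}$ and $M_{13} = z_3 L_{13}$, both functions $\psi_2 = -z_3 L_{12}/M_{11}$ and $\psi_3 = z_3 L_{13}/M_{11}$ are quotients of a degree-$2$ form by the degree-$2$ form $M_{11}$, hence genuine meromorphic functions on $C$ whose divisor is the intersection of the numerator with $C$ minus the intersection of the denominator with $C$ (the overall sign being irrelevant for divisors).

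First I would compute the three intersection divisors. The line $z_3 = 0$ is the line at infinity and meets $C$ at $\infty_1 + \infty_2 + \infty_3$. The line $L_{12} = 0$ is the side $l_{12} = P_1Q_2$ of $H_X$; since $l_{12} \parallel l_3$ it passes through $\infty_3$, and by the defining property of a regularly inscribed hexagon the third intersection of the line $P_1Q_2$ with $C$ is exactly $\infty_3$, so $l_{12}$ meets $C$ at $P_1 + Q_2 + \infty_3$. Hence the numerator $z_3 L_{12}$ cuts out $\infty_1 + \infty_2 + 2\infty_3 + P_1 + Q_2$ on $C$. The analogous computation for $\psi_3$ uses that $l_{13} = Q_3P_1$ is parallel to $l_2$, passes through $\infty_2$, and meets $C$ at $P_1 + Q_3 + \infty_2$, so the numerator $z_3 L_{13}$ cuts out $\infty_1 + 2\infty_2 + \infty_3 + P_1 + Q_3$.

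The denominator is the same in both cases, and this is where the only real subtlety lies. By Remark \ref{quadrics} the conic $M_{11} = 0$ has asymptotes $l_2, l_3$ and its finite intersections with $C$ are exactly $P_1$ and $Q_1$. The hard part is to account for the remaining intersection points at infinity: the conic passes through $\infty_2 \in l_2$ and $\infty_3 \in l_3$, and since $l_2$ (respectively $l_3$) is the tangent line to both $C$ and the conic at $\infty_2$ (respectively $\infty_3$), the local intersection multiplicity there is at least two. B\'ezout forces the total to be six, so these multiplicities are exactly two and there are no further intersections; thus $M_{11} = 0$ cuts out $2\infty_2 + 2\infty_3 + P_1 + Q_1$ on $C$. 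Subtracting the denominator from each numerator then yields
$$\mathrm{div}(\psi_2) = \infty_1 + Q_2 - \infty_2 - Q_1, \qquad \mathrm{div}(\psi_3) = \infty_1 + Q_3 - \infty_3 - Q_1,$$
which is exactly the claimed distribution of zeros and poles. I expect the tangency-at-infinity bookkeeping for $M_{11}$ to be the only step requiring care; everything else is a direct B\'ezout count once the sides of the hexagon are identified with the lines $L_{12} = 0$ and $L_{13} = 0$.
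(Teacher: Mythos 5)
Your proof is correct and takes essentially the same route as the paper's: both realize $\psi_2 = -M_{12}/M_{11}$ and $\psi_3 = M_{13}/M_{11}$ as ratios of degree-two forms, compute the numerator divisors from the factorizations $M_{1j} = z_3 L_{1j}$ together with the hexagon property of the lines $l_{1j}$, and take the denominator divisor $P_1 + Q_1 + 2\infty_2 + 2\infty_3$ from Remark~\ref{quadrics}. The only difference is that you spell out the tangency-at-infinity and B\'ezout bookkeeping for the quadric $M_{11}=0$, which the paper delegates to that remark.
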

\begin{proof}
As follows from Remark \ref{quadrics}, zeros of $M_{11}$ on $C$ are the points $P_1, Q_1$, and the points $\infty_2$ and $\infty_3$ taken with multiplicity $2$. Further, we have $M_{12} = z_3 L_{12}$ where $L_{12}$ is a linear function whose zero locus is the line $l_{12}$. Therefore, zeros of $M_{12}$ are the points $P_1$, $Q_2$, $\infty_1$, $\infty_2$, and the point $\infty_3$ taken with multiplicity $2$. Dividing $M_{12}$ by $M_{11}$, we obtain the desired statement about zeros and poles of $\psi_2$. Zeros and poles of $\psi_3$ are computed analogously.
\end{proof}
So, the vertex $Q_1$ of the hexagon $H_X$ is the only movable pole of the eigenvector $\psi$ (clearly, vertices $Q_2$ and $Q_3$ can be interpreted in the same way; to obtain these vertices as poles, one needs to renormalize the vector $\psi$ by setting $\psi_2 = 1$ or $\psi_3 = 1$). This allows us to describe the dynamics of $Q_1$ using standard technique (see e.g. the reviews \cite{DKN, DMN, reyman1994group}). Let $\omega$ be a holomorphic $1$-form on $C$; such a form is unique up to a constant factor. In the affine chart  $\lambda = z_1 / z_3$ and $\mu = z_2 / z_3$, it is given by
\begin{align}\label{hf}
\omega = \frac{\diff \mu}{\partial_\lambda f_X } =  -\frac{\diff \lambda}{\partial_\mu f_X } 
\end{align}
where $f_X(\lambda, \mu)= 0$ is the equation of the affine part of the curve $C$.
\begin{proposition}\label{velocityFormula}
Assume that $X$ evolves according to equation \eqref{mainEquation}. Then, for the holomorphic form $\omega$ normalized by \eqref{hf}, we have
$$
\omega\left( \diffFXp{Q_1}{t}\right) = 1.
$$
\end{proposition}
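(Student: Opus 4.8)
The plan is to follow the standard linearization scheme: derive an evolution equation for the normalized eigenvector $\psi$, read off the velocity of its unique movable pole $Q_1$ from the singular part of that equation, and then evaluate $\omega\bigl(\diffFXp{Q_1}{t}\bigr)$ by a global residue computation on the genus-one spectral curve $C$. The first step is to record the evolution law for $\psi$. Differentiating the eigenvector relation $X_\lambda\psi=\mu\psi$ (equivalently $X_z\psi=0$) in $t$, using the Lax equation \eqref{LaxFlow} together with the time-independence of $\mu$ (Proposition \ref{timeInd}), one obtains $X_\lambda(\partial_t\psi+Y_\lambda\psi)=\mu(\partial_t\psi+Y_\lambda\psi)$. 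Since $C$ is smooth, the eigenspace of $X_\lambda$ is one-dimensional, so $\partial_t\psi+Y_\lambda\psi=\kappa\psi$ for some scalar meromorphic function $\kappa$ on $C$; the normalization $\psi_1\equiv 1$ forces $\kappa=(Y_\lambda\psi)_1$. Thus $\partial_t\psi=(\kappa\,\Id-Y_\lambda)\psi$.

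Next I would localize at the movable pole. By Proposition \ref{hProps} the only $t$-dependent pole of $\psi$ is the simple pole at $Q_1$, so in a fixed local coordinate $\tau$ near $Q_1$ I write $\psi=\xi(t)/(\tau-s(t))+(\text{holomorphic})$, where $s(t)=\tau(Q_1(t))$ and $\xi_1=0$. Comparing the double-pole parts of the two sides of $\partial_t\psi=(\kappa\,\Id-Y_\lambda)\psi$ gives the velocity: the left side produces $\xi\,\dot s/(\tau-s)^2$, while on the right only $\kappa\psi$ is singular of second order, because $\kappa$ itself has a simple pole at $Q_1$ coming from $\psi_2,\psi_3$, and $Y_\lambda$ is regular at the finite point $Q_1$. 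Matching coefficients yields $\dot s=(Y_\lambda(Q_1)\xi)_1$, the residue of $\kappa$ at $Q_1$. Since $\omega$ is holomorphic at $Q_1$, this translates into the clean identity $\omega\bigl(\diffFXp{Q_1}{t}\bigr)=\Res_{Q_1}(\kappa\,\omega)$.

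Finally, I would apply the residue theorem to the meromorphic $1$-form $\kappa\,\omega$ on $C$. Its only poles are at $Q_1$ and at the three points $\infty_1,\infty_2,\infty_3$, where $Y_\lambda\sim\lambda A^2$ blows up, so $\omega\bigl(\diffFXp{Q_1}{t}\bigr)=-\sum_i\Res_{\infty_i}(\kappa\,\omega)$. Near $\infty_i$ one uses that the components of $\psi$ tend to the standard eigenvectors of $A$ and the normalization \eqref{hf} of $\omega$: in the coordinate $w=z_3/z_1$ one gets $\kappa=a_i^2/w+O(1)$ and $\omega=-\,dw/\prod_{j\neq i}(a_i-a_j)+O(w)\,dw$, hence $\Res_{\infty_i}(\kappa\,\omega)=-a_i^2/\prod_{j\neq i}(a_i-a_j)$. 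Summing and invoking the Lagrange-interpolation identity $\sum_i a_i^2/\prod_{j\neq i}(a_i-a_j)=1$ (the leading coefficient of $x^2$ interpolated at $a_1,a_2,a_3$), I conclude $\omega\bigl(\diffFXp{Q_1}{t}\bigr)=1$.

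The main obstacle is the last step: extracting the precise leading asymptotics of $\kappa=(Y_\lambda\psi)_1$ and of the normalized differential $\omega$ at the three points at infinity, and in particular pinning down the coefficient $a_i^2$ rather than some other value. This is the genuinely computational heart of the argument, since it is exactly the matching between the normalization \eqref{hf} and the principal part of $Y_\lambda$ at infinity that produces the constant $1$; everything before it is formal and survives verbatim for any $\gl(n)$ argument-shift flow.
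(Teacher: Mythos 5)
Your strategy is the same as the paper's: differentiate the eigenvector relation to get $\partial_t\psi=(\kappa\,\Id-Y_\lambda)\psi$ with $\kappa=(Y_\lambda\psi)_1$, match double-pole coefficients at the movable pole $Q_1$ to obtain $\omega\bigl(\diffFXp{Q_1}{t}\bigr)=\Res_{Q_1}(\kappa\,\omega)$, and then move the residue to the three points at infinity by the residue theorem. All of that part is correct (the paper additionally justifies one-dimensionality of $\ker X_z$ not by smoothness alone but via the hexagon: if the rank dropped to $1$, all minors $M_{ij}$ would vanish simultaneously and all vertices of $H_X$ would coincide, contradicting Proposition \ref{propsOfLIJ}; your appeal to smoothness is the standard fact quoted in Remark \ref{relGen}).

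The genuine gap is in your residue computation at $\infty_2$ and $\infty_3$. The justification you give --- ``the components of $\psi$ tend to the standard eigenvectors of $A$'' --- is false in the normalization $\psi_1\equiv 1$ that defines $\kappa$: by Proposition \ref{hProps}, $\psi_2$ has a fixed simple pole at $\infty_2$ (and $\psi_3$ at $\infty_3$), so $\psi$ blows up there instead of tending to $e_2$, $e_3$. Taken literally, substituting $\psi\approx e_2$ at $\infty_2$ gives $\kappa\approx(Y_\lambda e_2)_1=(a_1+a_2)x_{12}=O(1)$, hence zero residue at $\infty_2$ and $\infty_3$, and your final sum would collapse to $a_1^2/\prod_{j\neq 1}(a_1-a_j)\neq 1$. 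Your asserted asymptotics $\kappa=a_i^2/w+O(1)$ are in fact true, but proving them at $\infty_2$ requires accounting for the pole of $\psi_2$: the first row of $X_z\psi=0$ gives $x_{12}\psi_2+x_{13}\psi_3=\mu-a_1\lambda-x_{11}\approx(a_2-a_1)\lambda$, so the $1/w$-coefficient of $x_{12}\psi_2$ equals $a_2-a_1$, whence the $1/w$-coefficient of $\kappa$ is $a_1^2+(a_1+a_2)(a_2-a_1)=a_2^2$. The paper sidesteps this case analysis entirely with the identity $Y_\lambda=\lambda^{-1}(X_\lambda^2-X^2)$ combined with $X_\lambda\psi=\mu\psi$, which yields $(Y_\lambda\psi)_1=\mu^2\lambda^{-1}-\lambda^{-1}(X^2\psi)_1$ where the second term is regular at each $\infty_i$; this reduces everything uniformly to $\Res_{\infty_i}(\mu^2\lambda^{-1}\omega)$, a quantity depending only on the cubic terms of $f_X$, after which your Lagrange-interpolation identity finishes the proof exactly as you state.
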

\begin{proof}
We work in affine coordinates $\lambda = z_1 / z_3$ and $\mu = z_2 / z_3$. Equation $X_z\psi = 0$ can be rewritten as
\begin{align}\label{eve}
(X_\lambda - \mu \Id) \psi = 0
\end{align}
where $X_\lambda = X + \lambda A$. Assuming that $X_\lambda$ evolves according to Lax equation \eqref{LaxFlow} and differentiating \eqref{eve} with respect to time, we get
\begin{align}\label{fee}
(X_\lambda - \mu \Id) \left( \diffFXp{\psi}{t} + Y_\lambda \psi \right) = 0.
\end{align}
Note that  $\rank (X_\lambda - \mu \Id) = \rank X_z = 2$ at every point of $C$. Indeed, if $ \rank X_z = 1$ at some point $P \in C$, then this point is a common zero for all $2 \times 2$ minors $M_{ij}$ of the matrix $X_z$, which means that all vertices of the hexagon $H_X$ coincide. However, as follows from Proposition \ref{propsOfLIJ}, this is impossible. Thus, \eqref{fee} implies that
\begin{align}\label{evde}
 \diffFXp{\psi}{t} = ( \xi\cdot \Id - Y_\lambda) \psi\,,
\end{align}
where $\xi$ is a function on the curve $C$. Using that $\psi_1 \equiv 1$, we find that $\xi$ is equal to the first coordinate $(Y_\lambda \psi)_1$ of the vector $Y_\lambda \psi$.\par
Now, let us consider the point $Q_1$ at some moment of time $t_0$, and let $u$ be a local coordinate on $C$ near this point. Let also $u_q(t)$ be the $u$-coordinate of $Q_1$ at moment $t$. Provided that $Q_1$ does not coincide with $ \infty_2$ or $\infty_3$, we have
\begin{align}\label{pe}
\psi(t) = \frac{h(t)}{u - u_q(t)} + \mbox{terms holomorphic in } u\,,
\end{align}
where $h(t)$ is a vector holomorphic in $u$. Substituting \eqref{pe} into \eqref{evde} and equating coefficients in $(u-u_q(t))^{-2}$, we get 
\begin{align*} \diffFXp{u_q}{t} = (Y_\lambda h)_1\,,
\end{align*}
where $(Y_\lambda h)_1$ is the first coordinate of the vector $Y_\lambda h$. The coordinate free-form of this equation is
\begin{align}\label{pde}
\omega\left( \diffFXp{Q_1}{t}\right) = \Res_{Q_1}((Y_\lambda \psi)_1 \omega)\,,
\end{align}
where $\omega$ is the holomorphic form defined above. Now, note that  the form $(Y_\lambda \psi)_1 \omega$ may only have poles at those points where either $Y_\lambda$ or $\psi$ have a pole, i.e. at points $Q_1$, $\infty_1$, $\infty_2$, $\infty_3$. Therefore, by Cauchy's residue theorem, we have
\begin{align}
\label{residues}
\Res_{Q_1}((Y_\lambda \psi)_1 \omega) = -\sum_{i=1}^3  \Res_{\infty_i}((Y_\lambda \psi)_1 \omega)\,.
\end{align}
Further, note that
$
Y_\lambda = \lambda^{-1}(X_\lambda^2 - X^2)
$ (cf. formula \eqref{xy}), therefore,
\begin{align*}
\begin{aligned}
\Res_{\infty_i}((Y_\lambda \psi)_1 \omega) = \Res_{\infty_i}((X_\lambda^2 \psi)_1 \lambda^{-1}\omega) -\Res_{\infty_i}((X^2 \psi)_1\lambda^{-1} \omega) = \Res_{\infty_i}(\mu^2 \lambda^{-1}\omega) 
\end{aligned}
\end{align*}
where we used the identities $X_\lambda \psi = \mu \psi$, $\psi_1 = 1$, and that the function $(X^2 \psi)_1\lambda^{-1}$ does not have a pole at the point $\infty_i$. Combining the last formula with \eqref{pde} and \eqref{residues}, we conclude that
\begin{align}
\label{velocity}
\omega\left( \diffFXp{Q_1}{t}\right) = -\sum_{i=1}^3 \Res_{\infty_i}(\mu^2 \lambda^{-1}\omega).
\end{align}
Note that although we assumed in the proof that $Q_1 \neq \infty_2$ and $Q_1 \neq \infty_3$, formula~\eqref{velocity} still holds for these points by continuity argument. Now, to complete the proof, it suffices to compute the residues. This can be easily done using formula \eqref{hf} and the explicit expression for the polynomial $f_X$ (note that only cubic terms of $f_X$ affect the residues).
\end{proof}
Note that the conclusion of Proposition \ref{velocityFormula} is obviously true for other $Q$-vertices of the hexagon $H_X$ as well, i.e.
$$
\omega\left( \diffFXp{Q_1}{t}\right) = \omega\left( \diffFXp{Q_2}{t}\right) = \omega\left( \diffFXp{Q_3}{t}\right) = 1\,.
$$
For $P$-vertices, we have
$$
\omega\left( \diffFXp{P_1}{t}\right) = \omega\left( \diffFXp{P_2}{t}\right) = \omega\left( \diffFXp{P_3}{t}\right) = -1\,.
$$
The latter can be proved by noting that \eqref{mainEquation} is anti-invariant with respect to transformation $X \mapsto X^\mathrm{t}$ which preserves the spectral curve $C$, and interchanges $P$-vertices with $Q$-vertices.\par
Now, for each oval $O_i$ of the curve $C$, fix a point $R_i \in O_i$ and consider the function
$$
\phi(R) = \int_{R_i}^{R} \omega\,.
$$
Then $\phi$ is a periodic coordinate on $O_i$. In terms of the coordinate $\phi$, the dynamics of vertices of $H_X$ is linear: $\diff \phi / \diff t = \pm 1$. This in particular implies that the dynamics of the hexagon $H_X$ is monotonous and periodic. The period is given by the integral of $\omega$ along any of the ovals of $C$ (note that this integral is the same for both ovals since they are homologous cycles in $C$).
\subsection{Complete and blow-up solutions}
In this section, we prove Theorems \ref{completenesstheorem} and \ref{phasetheorem}, i.e. we investigate the dynamics of \eqref{mainEquation} at each connected component of the set $\mathcal T_C$. 
As we know from Section \ref{topology}, the set $\mathcal T_C$ is the total space of a principal $\mathbb P\mathrm D(3,\R)$ bundle over the set $\mathcal H_C \setminus H_d$ where $\mathcal H_C$ is the set of all hexagons regularly inscribed in $C$, and $H_d$ is the degenerate hexagon. Furthermore, flow~\eqref{mainEquation} is invariant with respect to the $\mathbb P\mathrm D(3,\R)$ action on $\mathcal T_C$. This allows us to apply the following classical result.
\begin{theorem}[A.\,Lichnerowicz \cite{lichnerowicz1976global}]\label{lt}
Let $\pi \colon E \to B$ be a principal $G$-bundle, and let $v$ be a vector field on $E$ which is invariant with respect to the $G$-action. Then an integral trajectory $x(t)$ of the field $v$ is complete if and only if the corresponding trajectory of the field $\pi_*v$ on the base is complete.
\end{theorem}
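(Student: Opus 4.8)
The plan is to prove Theorem~\ref{lt} by reducing completeness of the upstairs flow to completeness of the downstairs flow, exploiting the fact that the fibers of a principal $G$-bundle are acted on transitively and freely by $G$, and that the $G$-action is by isometries (in the sense of commuting with the flow). The only nontrivial direction is the ``if'' part: given that the projected trajectory $\pi_* v$ is complete, we must show $v$ itself is complete, since the converse follows immediately because $\pi$ maps integral curves of $v$ to integral curves of $\pi_* v$, and the image of a curve defined on all of $\R$ is again defined on all of $\R$.

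First I would set up local triviality. Around any point of the base, choose a trivializing neighborhood $U$ so that $\pi^{-1}(U) \simeq U \times G$. In this trivialization, $G$-invariance of $v$ forces $v$ to have the form $v = (\pi_* v, w)$ where the $U$-component is precisely the projected field and the $G$-component $w$ is a $G$-invariant vector field along the fibers; because $w$ is invariant under left multiplication (taking the standard convention for principal bundles), it is determined by its value at the identity and depends only on the base point, i.e.\ it is a family of right-invariant vector fields on $G$ parametrized by $U$. The key structural observation is that an integral curve of $v$ upstairs is obtained by first solving the base equation to get $b(t) = (\pi_* v)$-trajectory, and then solving a \emph{nonautonomous linear-type} equation on the fiber $G$ driven by $b(t)$.

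The crux of the argument is then the following. Suppose the base trajectory $b(t)$ is defined for all $t \in \R$. I would cover the image $\{ b(t) : t \in \R \}$ of any \emph{finite} time interval $[0, T]$ by finitely many trivializing charts (using compactness of $b([0,T])$). Within each chart the fiber equation is a time-dependent flow on the group $G$ generated by right-invariant vector fields; such a flow is complete because right-invariant vector fields on a Lie group are complete, and more generally a time-dependent right-invariant flow is the solution of an equation of the form $\dot g = \rho(t) \cdot g$ whose solution exists as long as the driving term $\rho(t)$ does, which here is as long as $b(t)$ does. Patching these finitely many chart-wise solutions together over $[0,T]$ produces a well-defined integral curve of $v$ on $[0,T]$; since $T$ was arbitrary, the upstairs trajectory exists for all time, proving completeness.

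The main obstacle I anticipate is making the ``fiber flow is complete'' step rigorous in a way that genuinely uses the group structure rather than just ODE theory, since a priori an integral curve on a noncompact fiber could escape to infinity in finite time even when the base curve persists. The resolution is exactly that completeness of right-invariant vector fields on a Lie group is automatic (they generate a one-parameter subgroup), and that $G$-invariance of $v$ guarantees the fiber dynamics is of this right-invariant type; this is where the hypothesis that the bundle is \emph{principal} (as opposed to a general fiber bundle) is essential. Once this is clear, the finite-cover-and-patch argument is routine, and I would note that the statement is purely local-to-global in the time variable, so no global triviality of the bundle is needed.
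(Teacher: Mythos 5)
The paper offers no proof of Theorem~\ref{lt} at all --- it is quoted as a classical result with a citation to Lichnerowicz --- so your argument must stand on its own, and it does. The proof is correct, and it is the standard one (essentially the same argument that shows horizontal lifts of paths exist in principal bundles): the only substantive direction is lifting a complete base trajectory, and your chain of reductions --- local trivialization $\pi^{-1}(U)\simeq U\times G$; $G$-invariance forcing the fiber component of $v$ to be a translation-invariant field on $G$ determined by a $\mathfrak{g}$-valued function $\rho$ of the base point alone; completeness of the non-autonomous equation $\dot g = \mathrm{d}R_g\,\rho(t)$ for as long as the driving curve $b(t)$ persists; then compactness of $b([0,T])$ plus finitely many charts and patching --- is exactly right. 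You also correctly identify the one place where the \emph{principal} hypothesis is essential: on a noncompact fiber a solution could a priori blow up even while the base curve survives, and what rules this out is that solutions of a translation-invariant fiber equation through different fiber points are translates of one another, so the local existence time is uniform across the fiber and the time interval can be exhausted in finitely many steps. Two cosmetic points. First, your conventions are mixed: with the usual right action of $G$ on $E$, invariance makes the fiber component invariant under \emph{right} translations, hence right-invariant, and the fiber ODE is $\dot g = \mathrm{d}R_g\,\rho(t)$; the phrase ``invariant under left multiplication'' describes the opposite convention --- either one works, but they should not be combined. Second, the assertion that $\dot g=\rho(t)\cdot g$ ``exists as long as $\rho(t)$ does'' is the crux of the whole theorem and deserves its one-line proof (uniform local existence near the identity over a compact time interval, propagated to all of $G$ by translation invariance); you state this mechanism in words in your final paragraph, so the idea is present, but in a written-up version it should be displayed as the key lemma rather than folded into a parenthetical.
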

This result implies that one can study the behavior of solutions of \eqref{mainEquation} by considering their projections to the space of hexagons. Dynamics of the space of hexagons was studied in the previous section: it is linear dynamics.\par
To prove Theorem \ref{completenesstheorem}, we consider separately each of the components $\mathcal T_C^{i,j}$ defined in Section~\ref{topology}.
For example, consider a solution $X(t)$  of equation \eqref{mainEquation} such that the corresponding hexagon $H_X$ has type $(0,6)$. The latter means that $X(t) \in \mathcal T_C^{0,6}$ in the notation of Section~\ref{topology}. The set $\mathcal T_C^{0,6}$ is a principal bundle over $\mathcal H_C^{0,6} \setminus H_d$. Furthermore, as follows from Section \ref{hexDyn}, the dynamics on $\mathcal H_C^{0,6}$ is linear in terms of the coordinate $\phi$, so the projection of $X(t)$ to the base $\mathcal H_C^{0,6} \setminus H_d$ meets the degrease hexagon $H_d$ and thus blows up in finite time. Therefore, by Theorem~\ref{lt}, the trajectory $X(t)$ itself also blows up.\par
An analogous consideration shows that trajectories of \eqref{mainEquation}  corresponding to hexagons of type $(4,2)$ also blow up, while trajectories corresponding to types $(6,0)$ or $(2,4)$ exist for all times. Thus, Theorem \ref{completenesstheorem} is proved.
\par
Note that this consideration also proves the dynamical part of Theorem \ref{surfacetheorem} since all $\R^3$ components of $\mathcal T_C$ correspond to hexagons of type $(0,6)$ or $(4,2)$, while all ${S}^1 \times \R^2$ components correspond to $(6,0)$ or $(2,4)$ (see Section \ref{topology}).
\par
Now, let us prove Theorem \ref{phasetheorem}. As follows from the previous section, if a trajectory $X(t)$ exists for all times, then the dynamics of the corresponding hexagon is periodic with some period $T$. Now, the formula
$$
X(t + T) = MX(t)M^{-1}.
$$
easily follows from the $\mathbb P\mathrm D(3,\R)$ invariance of the flow \eqref{mainEquation}. Further, let us show that $M \in \mathbb{P}\mathrm{D}^+(3, \R)$ if $H_X$ has type $(6,0)$. Consider the bundle $\mathcal T_C^{6,0} \to \mathcal H_C^{6,0}$. By definition of the number $T$, the matrices $X(t+T)$ and $X(t)$ lie in the same fiber of this bundle. Furthermore, since it is a trivial bundle, $X(t+T)$ and $X(t)$ should lie in the same connected component of the fiber, and thus $M \in \mathbb{P}\mathrm{D}^+(3, \R)$. \par Analogously, since the bundle $\mathcal T_C^{2,4} \to \mathcal H_C^{2,4}$ is not trivial, we have $M \notin \mathbb{P}\mathrm{D}^+(3, \R)$ if $H_X$ has type $(2,4)$, so Theorem \ref{phasetheorem} is proved.

\section{Discussion}\label{discussion}
Note that equation \eqref{mainEquation}, as well as the definitions of the spectral curve and the corresponding set $\mathcal T_C$, can be without any difficulty generalized to $\gl_n(\R)$. It is an interesting question how to generalize our results to this case. In particular, is it always true that the topology of $\mathcal T_C$ is completely determined by the geometry of $C_\R$? It is particularly interesting whether the structure of the set $\mathcal T_C$ depends on the way the ovals of $C$ are nested into each other. Note that description of all possible relative positions of ovals for a real algebraic curve of degree $n$ is the Hilbert 16th problem which is still unsolved in full generality.
\par
We also note that the problem of description of the set $\mathcal T_C$ can be reformulated in purely algebro-geometric terms. Similarly to the $\gl(3,\R)$ case, let $a_1, \dots, a_n$ be the eigenvalues of the matrix $A$. For simplicity, let all eigenvalues $a_1, \dots, a_n$ be real, so that we may assume that $A$ is a diagonal matrix. Further, let $\infty_i = (1 : a_i : 0) \in \CP^2$, and let $C$ be a smooth real projective curve of degree $n$ passing through the points $\infty_1, \dots, \infty_n$. Then, as follows from the results of~\cite{beauville2}, the manifold $\mathcal T_C = \{ X \in \gl(n, \R) : C_X = C\}$ can be described as a principal $(\R^*)^{n-1}$ bundle over the real part of $\Jac(C) \setminus \Theta$ where $\Jac(C)$ is the Jacobian of the curve $C$, and $\Theta \subset \Jac(C)$ is the theta-divisor (the latter can also be deduced from the earlier papers \cite{mvm, Adler, Reyman2}).
\par
Further, it is a classical result by Comessatti (see e.g. \cite{seppala1989moduli}) that the topology of the real part of $\Jac(C)$ is uniquely determined by the number of ovals of $C$. However, one still needs to understand the structure of the \nolinebreak $(\R^*)^{n-1}$ bundle $ \mathcal T_C \to (\Jac(C) \setminus \Theta)_\R.$
 For $n=3$ this is (in much more elementary terms) done in the present paper.
\par
 Also note that there is a more explicit algebro-geometric description of the set $ \mathcal{T}_C$ due to L.\,Gavrilov \cite{Gavrilov}. Namely, consider the singular curve $C_s$ obtained from the smooth curve $C$ by identifying the points $\infty_1, \dots, \infty_n$, and let $\Jac(C_s)$ be the generalized Jacobian of $C_s$. Then, as follows from \cite{Gavrilov}, {the set $\mathcal T_C$ is diffeomorphic to the real part of $\Jac(C_s) \setminus \pi^{-1}(\Theta)$} where $\pi \colon \Jac(C_s) \to \Jac(C)$ is the canonical projection. 
Thus, to describe the set $\mathcal T_C$, one needs to describe the real part of the generalized Jacobian $\Jac(C_s)$, and to study how it intersects the preimage of the theta divisor under the projection $\pi$.

 \printbibliography

\end{document}